\newcommand\D{\displaystyle}
\newcommand\I{{\rm i}}
\newcommand\E[1]{\hbox{\rm e}^{#1}}
\newcommand\hr{\hbox{\bf R}}
\newcommand\zbar{{\bar z}}
\newcommand\mod{\,\hbox{\rm mod}\,}
\newtheorem{theorem}{\textbf{Theorem}}
\newtheorem{lemma}{\textbf{Lemma}}
\newenvironment{remark}{\textbf{Remark. }}{}
\newenvironment{proof}{\textbf{Proof }}{}
\begin{document}

\title{Darboux transformations for two dimensional elliptic
affine Toda equations}

\author{\small Zi-Xiang Zhou\\
\small School of Mathematical Sciences, Fudan University, Shanghai
200433, China\\\small  Email: zxzhou@fudan.edu.cn}

\date{}

\maketitle

\begin{abstract}
The Darboux transformations for the two dimensional elliptic affine
Toda equations corresponding to all seven infinite series of affine
Kac-Moody algebras, including $A_l^{(1)}$, $A_{2l}^{(2)}$,
$A_{2l-1}^{(2)}$, $B_l^{(1)}$, $C_l^{(1)}$, $D_l^{(1)}$ and
$D_{l+1}^{(2)}$, are presented. The Darboux transformation is
constructed uniformly for the latter six series of equations with
suitable choice of spectral parameters and the solutions of the Lax
pairs so that all the reality symmetry, cyclic symmetry and complex
orthogonal symmetry of the corresponding Lax pairs are kept
invariant. The exact solutions of all these two dimensional elliptic
affine Toda equations are obtained by using Darboux transformations.
\end{abstract}

\section{Introduction}

The two dimensional Toda equations \cite{bib:Mikhailov} are
important integrable systems. They have been studies by various
methods such as inverse scattering, Hirota method, Darboux
transformation etc.
\cite{bib:Aratyn,bib:Husinh,bib:Mackay,bib:Matveevbook,bib:Mikhailov2,bib:Sav}.
They also have important applications in Toda field theory
\cite{bib:Gomes,bib:ZhuCaldi} and in both Riemannian and affine
geometry
\cite{bib:Bobenko,bib:GHZbook,bib:McIntosh,bib:RazSav,bib:Rogersbook,bib:Sav,bib:Terng}.

For any affine Kac-Moody algebra $g$, there is a two dimensional
elliptic Toda equation \cite{bib:Kac,bib:McIntosh}
\begin{equation}
   \triangle w_k=\exp\Big(\sum_{i=1}^lc_{ik}w_i\Big)
   -\kappa_k\exp\Big(\sum_{i=1}^lc_{i0}w_i\Big)\quad
   (k=1,\cdots,l)\label{eq:Toda}
\end{equation}
where $C=(c_{ij})_{0\le i,j\le l}$ is the generalized Cartan matrix
of $g$, and $(\kappa_0,\cdots,\kappa_l)C=0$.

There are seven infinite series of affine Kac-Moody algebras:
$A_l^{(1)}$, $A_{2l}^{(2)}$, $A_{2l-1}^{(2)}$, $B_l^{(1)}$,
$C_l^{(1)}$, $D_l^{(1)}$, $D_{l+1}^{(2)}$ \cite{bib:Kac}. The
$A_l^{(1)}$ Toda equation is just the periodic Toda equation and has
been studied by a lot of authors. Especially, the Darboux
transformation was given by \cite{bib:Matveev} in a simple form. For
the two dimensional Toda equations with $g=A_l^{(1)}$,
$A_{2l}^{(2)}$ and $A_{2l-1}^{(2)}$, the Darboux transformation for
complex solutions was presented in
\cite{bib:Nirov1,bib:Nirov2,bib:Nirov3}. For the two dimensional
(real) hyperbolic affine Toda equations with $g=A_{2l}^{(2)}$,
$C_l^{(1)}$ and $D_{l+1}^{(2)}$, the binary Darboux transformation
(in integral form) was given by \cite{bib:Nimmo} and the Darboux
transformation (in differential form) was given by
\cite{bib:ZhouToda,bib:ZhouTodaCD,bib:ZhouTodaCDreal}.

In this paper, we present the construction of Darboux transformation
for a Lax pair on $\hr^2$ with a reality symmetry, a cyclic symmetry
and a complex orthogonal symmetry simultaneously. This system
contains the two dimensional elliptic Toda equations corresponding
to the affine Kac-Moody algebras $A_{2l}^{(2)}$, $A_{2l-1}^{(2)}$,
$B_l^{(1)}$, $C_l^{(1)}$, $D_l^{(1)}$ and $D_{l+1}^{(2)}$. The
degree of Darboux transformation must be high enough to keep all the
symmetries of the Lax pairs. What is more, in order to get real
solutions, the construction of Darboux transformations in elliptic
case is different from that in hyperbolic case or in complex case.
The main differences are that the distribution of their spectra are
different, and the Darboux matrix depends only on the solutions of
the Lax pair in hyperbolic or complex case but depends explicitly on
both the solutions of the Lax pair and the potentials in elliptic
case. For the two dimensional elliptic Toda equations corresponding
to $A_l^{(1)}$, there is only a reality symmetry and a cyclic
symmetry. Hence its Darboux transformation can be constructed in a
comparatively easier way.

In Section~\ref{sect:LP}, the Lax pair containing the two
dimensional elliptic $A_{2l}^{(2)}$, $A_{2l-1}^{(2)}$, $B_l^{(1)}$,
$C_l^{(1)}$, $D_l^{(1)}$ and $D_{l+1}^{(2)}$ Toda equations and its
symmetries are discussed. This leads to the symmetries of the
spectrum and the symmetries of the corresponding solutions of the
Lax pair. In Section~\ref{sect:DT}, we discuss the general
construction of Darboux transformation which keeps all the reality
symmetry, cyclic symmetry and complex orthogonal symmetry. The
simplified explicit expressions of the solutions are obtained by
Darboux transformations. In Section~\ref{sect:Egs}, the Darboux
transformation for the two dimensional elliptic $A_l^{(1)}$ Toda
equation is presented briefly, and the Darboux transformations for
the two dimensional elliptic $A_l^{(1)}$, $A_{2l}^{(2)}$,
$A_{2l-1}^{(2)}$, $B_l^{(1)}$, $C_l^{(1)}$, $D_l^{(1)}$ and
$D_{l+1}^{(2)}$ Toda equations are derived from the general results
in Section~\ref{sect:DT}.

\section{Elliptic Toda equations and their Lax pairs} \label{sect:LP}

First we present some notations.

Let $N$ and $m$ be integers with $N\ge 2$. Let $r_1,\cdots,r_N$ be
integers whose values are only $1$ or $2$ such that $r_j=2$ only if
$2j\equiv m\mod N$. Denote $M=r_1+\cdots+r_N$.

We divide any $M\times M$ matrix $A$ as $A=(A_{jk})_{1\le j,k\le N}$
where $A_{jk}$ is a matrix of order $r_j\times r_k$.  For
convenience, the subscripts of a submatrix of a matrix $A$ or a
vector $v$ are supposed to take any integer so that
$A_{j'k'}=A_{jk}$ and $v_{j'}=v_j$ if $j'\equiv j\mod N$ and
$k'\equiv k\mod N$.

Let $\omega=\exp(2\pi\I/N)$,
$\Omega=(\omega^{-j+1}I_{r_j}\delta_{jk})_{1\le j,k\le N}$. Let
$K=(K_{jk})_{1\le j,k\le N}$ be an $M\times M$ invertible real
symmetric matrix such that $K_{j,m-j}=1$ if $r_j=1$,
$K_{j,m-j}=\left(\begin{array}{cc}0&1\\1&0\end{array}\right)$ if
$r_j=2$, and $K_{jk}=0$ if $j+k\not\equiv m\mod N$. Let
$J=(J_{jk})_{1\le j,k\le N}$ be an $M\times M$ real matrix
satisfying $KJK=J^T$ and $\Omega J\Omega^{-1}=\omega J$.

Then $\Omega^N=I_M$ where $I_k$ is the $k\times k$ identity matrix,
$K$ satisfies $K^2=I_M$ and $\Omega K\Omega=\omega^{2-m}K$, and
$J_{jk}=0$ unless $k-j\equiv 1\mod N$.

\begin{remark}
Since $K$ is invertible and symmetric, $r_j=r_{m-j}$ must hold.
Hence if there is only one $j$ such that $r_j=2$, then $j$ must
satisfies $2j\equiv m\mod N$. The condition that $r_j=2$ only if
$2j\equiv m\mod N$ at the beginning of this section is a stronger
one so that the Darboux transformation constructed later will keep
all the symmetries.
\end{remark}

Let $z=(x+\I y)/2$, $\bar z=(x-\I y)/2$ be the complex coordinates
of $\hr^2$, then the Laplacian $\triangle=\partial_z\partial_{\bar
z}$. Consider the Lax pair
\begin{equation}
   \Phi_z=(\lambda J+P)\Phi,\quad \Phi_\zbar=\frac 1\lambda Q\Phi
   \label{eq:LP}
\end{equation}
where
\begin{equation}
    P=V_zV^{-1},\quad Q=VJ^TV^{-1}\label{eq:PQ}
\end{equation}
and $V(z,\bar z)$ is an $M\times M$ real diagonal matrix.

The integrability condition of (\ref{eq:LP}) is
\begin{equation}
   (V_zV^{-1})_{\bar z}+[J,VJ^TV^{-1}]=0,\label{eq:intc}
\end{equation}
which is a kind of two dimensional Toda equation when $V$ has
certain symmetries.

We will consider $V$ satisfying reality symmetry $\bar V=V$, cyclic
symmetry $\Omega V\Omega^{-1}=V$ and complex orthogonal symmetry
$V^TKV=K$ with respect to the indefinite metric given by the real
symmetric matrix $K$, so that (\ref{eq:intc}) will lead to the two
dimensional elliptic $A_{2l}^{(2)}$, $A_{2l-1}^{(2)}$, $B_l^{(1)}$,
$C_l^{(1)}$, $D_l^{(1)}$ and $D_{l+1}^{(2)}$ Toda equations. Without
the complex orthogonal symmetry, (\ref{eq:intc}) will lead to the
two dimensional elliptic $A_l^{(1)}$ Toda equation.

By direct computation, we have
\begin{lemma}
If the diagonal matrix $V(z,\bar z)$ satisfies
\begin{equation}
   \bar V=V,\quad \Omega V\Omega^{-1}=V,\quad
   V^TKV=K,\label{eq:Vred}
\end{equation}
then $P=V_zV^{-1}$ and $Q=VJ^TV^{-1}$ satisfy
\begin{equation}
   \begin{array}{l}
   \Omega P\Omega ^{-1}=P,\quad \Omega Q\Omega^{-1}=\omega^{-1} Q,\\
   KPK=-P^T,\quad KQK=Q^T.\label{eq:PQred}
   \end{array}
\end{equation}
\end{lemma}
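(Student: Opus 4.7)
The plan is to verify each of the four identities in (\ref{eq:PQred}) by direct substitution, using the three symmetries (\ref{eq:Vred}) of $V$ together with the properties of $\Omega$, $K$ and $J$ recorded just before the lemma. I would split the work into the two cyclic relations first, and the two complex orthogonal relations afterwards.

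For the cyclic symmetry of $P$, I would differentiate $\Omega V\Omega^{-1}=V$ in $z$ to get $\Omega V_z\Omega^{-1}=V_z$, and invert to get $\Omega V^{-1}\Omega^{-1}=V^{-1}$; multiplying the two relations yields $\Omega P\Omega^{-1}=P$. For $Q$ the key step is to transfer the cyclic relation $\Omega J\Omega^{-1}=\omega J$ to $J^T$: since $\Omega$ is diagonal one has $\Omega^T=\Omega$, so transposing gives $\Omega^{-1}J^T\Omega=\omega J^T$, i.e.\ $\Omega J^T\Omega^{-1}=\omega^{-1}J^T$. Conjugating $Q=VJ^TV^{-1}$ by $\Omega$ and using the cyclic relation on $V$ then produces $\Omega Q\Omega^{-1}=\omega^{-1}Q$.

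For the complex orthogonal relations the crucial simplification is that $V$ is diagonal, so $V^T=V$ and $V^TKV=K$ reduces to $VKV=K$, equivalently $KVK=V^{-1}$ (using $K^2=I_M$). Differentiating $KVK=V^{-1}$ in $z$ gives $KV_zK=-V^{-1}V_zV^{-1}$, whence
\[
   KPK=(KV_zK)(KV^{-1}K)=-V^{-1}V_zV^{-1}\cdot V=-V^{-1}V_z=-P,
\]
which equals $-P^T$ because $P$ is diagonal. For $Q$ the hypothesis $KJK=J^T$ rearranges to $KJ^TK=J$, and combining with $KVK=V^{-1}$ gives $KQK=(KVK)(KJ^TK)(KV^{-1}K)=V^{-1}JV$; on the other hand $Q^T=(V^{-1})^TJV^T=V^{-1}JV$ since $V$ is diagonal, so $KQK=Q^T$.

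There is no serious obstacle: the proof is a chain of substitutions. The only points deserving care are the appearance of $\omega^{-1}$ rather than $\omega$ in the cyclic relation for $Q$, which arises from transposing the corresponding relation for $J$, and the systematic use of the fact that $V$ is diagonal, which both reduces $V^TKV=K$ to $KVK=V^{-1}$ and lets $V^{-1}$ and $V_z$ commute freely in the computation of $KPK$.
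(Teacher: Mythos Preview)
Your proof is correct and follows exactly the route the paper indicates: the paper simply says ``By direct computation, we have'' before stating the lemma, and your argument supplies precisely that computation. There is nothing to add.
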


\begin{remark}
If $\Omega$ is changed to $\omega\Omega$, then the relation $\Omega
K\Omega=\omega^{2-m}K$ is changed to $\Omega
K\Omega=\omega^{2-(m+2)}K$, but $\Omega J\Omega^{-1}=\omega J$ and
$\Omega V\Omega^{-1}=V$ are unchanged. This means that the system is
invariant if $m$ is changed to $m+2$ and any subscript $j$ is
changed to $j+1$. Therefore, it is only necessary to choose $m=1$ or
$m=2$ when $N$ is even and $m=2$ when $N$ is odd. Other systems are
equivalent to these ones.
\end{remark}

The symmetries of the Lax pair lead to the symmetries of its
solutions. This fact is shown in the following lemma and will be
used in constructing Darboux transformations in the next section.

\begin{lemma}\label{lemma:sys}
Suppose $V$ satisfies (\ref{eq:Vred}), $\Phi$ is a solution of the
Lax pair (\ref{eq:LP}) with $\lambda=\lambda_0$, then the following
facts hold.

(i) $\Omega\Phi$ is a solution of (\ref{eq:LP}) with
$\lambda=\omega\lambda_0$.

(ii) $KV^{-1}\bar\Phi$ is a solution of (\ref{eq:LP}) with
$\lambda=1/\bar\lambda_0$.

(iii) $\Psi=K\Phi$ is a solution of the adjoint Lax pair with
$\lambda=-\lambda_0$:
\begin{equation}
   \Psi_z=-(-\lambda_0 J^T+P^T)\Psi,\quad
   \Psi_\zbar=-\frac 1{-\lambda_0}Q^T\Psi.
   \label{eq:LPadj-}
\end{equation}

(iv) When $N$ is even, $\Psi=K\Omega^{N/2}\Phi$ is a solution of the
adjoint Lax pair with $\lambda=\lambda_0$:
\begin{equation}
   \Psi_z=-(\lambda_0 J^T+P^T)\Psi,\quad
   \Psi_\zbar=-\frac 1{\lambda_0}Q^T\Psi.
   \label{eq:LPadj}
\end{equation}
Hence $\Phi^TK\Omega^{N/2}\Phi$ is a constant when $N$ is even.
\end{lemma}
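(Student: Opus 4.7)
The plan is to verify each claim by substituting the proposed $\Psi$ into the Lax pair (\ref{eq:LP}) (or its adjoint) and applying the symmetry identities of Lemma~1 together with $\Omega J\Omega^{-1}=\omega J$ and $KJK=J^T$. A recurring ingredient is that since $V$ is real diagonal, $V^T=V$, so $V^TKV=K$ becomes $VKV=K$, equivalently $KV^{-1}=VK$ and $V^{-1}KV^{-1}=K$; moreover $P=V_zV^{-1}$ is diagonal, so $P^T=P$ and the identity $KPK=-P^T$ collapses to the anticommutation $KP+PK=0$.

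For \textbf{(i)}, I would push $\Omega$ through the Lax pair: $\Omega J=\omega J\Omega$ turns $\lambda_0 J$ into $\omega\lambda_0 J$ in the $z$-equation, $\Omega P=P\Omega$ leaves $P$ alone, and $\Omega Q=\omega^{-1}Q\Omega$ converts the factor $1/\lambda_0$ into $1/(\omega\lambda_0)$ in the $\bar z$-equation. For \textbf{(iii)}, multiplying the Lax pair on the left by $K$ converts $KJ$, $KP$ and $KQ$ into $J^TK$, $-P^TK$ and $Q^TK$ by the identities $KJK=J^T$, $KPK=-P^T$ and $KQK=Q^T$; reading the result off gives precisely (\ref{eq:LPadj-}) evaluated at $-\lambda_0$.

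Part \textbf{(ii)} is the main computational step. Since $x$ and $y$ are real, $\overline{\partial_z}=\partial_{\bar z}$, and because $V$ and $J$ are real, conjugating (\ref{eq:LP}) yields $\bar\Phi_{\bar z}=(\bar\lambda_0 J+V_{\bar z}V^{-1})\bar\Phi$ and $\bar\Phi_z=(1/\bar\lambda_0)Q\bar\Phi$. Differentiating $KV^{-1}\bar\Phi$ in $\bar z$, the terms involving $V_{\bar z}V^{-1}$ cancel, leaving $\bar\lambda_0 KV^{-1}J\bar\Phi$; the identities $KV^{-1}=VK$, $V^{-1}KV^{-1}=K$ and $KJ=J^TK$ collapse this to $\bar\lambda_0 Q(KV^{-1}\bar\Phi)$, which is the required right-hand side at $\lambda=1/\bar\lambda_0$. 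In the $z$-direction, the two $V_zV^{-1}$ contributions combine via $KP+PK=0$ into $P(KV^{-1}\bar\Phi)$, while the $Q$-term rewrites as $(1/\bar\lambda_0)JKV^{-1}\bar\Phi$ using $KJ^T=JK$; together these give $((1/\bar\lambda_0)J+P)(KV^{-1}\bar\Phi)$. For \textbf{(iv)}, applying (i) exactly $N/2$ times---permissible because $N$ is even---and using $\omega^{N/2}=-1$ produces a Lax-pair solution at $\lambda=-\lambda_0$; then (iii) converts this into an adjoint-Lax-pair solution at $\lambda=\lambda_0$. Finally, to show that $F=\Phi^TK\Omega^{N/2}\Phi=\Phi^T\Psi$ is constant, I would compute $F_z$ using (\ref{eq:LP}) on $\Phi$ and (\ref{eq:LPadj}) on $\Psi$; the two contributions cancel because the adjoint pair is the formal transpose of the original. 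The analogous argument handles $F_{\bar z}$. The only mildly technical step is the algebraic juggling in (ii); everything else is direct symbolic substitution.
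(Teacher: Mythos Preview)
Your proposal is correct and follows essentially the same route as the paper's proof: direct verification by pushing $\Omega$, $K$, and $KV^{-1}$ through the Lax pair using the symmetry identities of Lemma~1, with (iv) obtained by composing (i) (applied $N/2$ times) and (iii). One small slip in your description of (ii): in the $z$-direction there is only \emph{one} $V_zV^{-1}$ term (coming from differentiating $V^{-1}$), not two; the identity $KP+PK=0$ is precisely what converts $-KV^{-1}P\bar\Phi$ into $+P\,KV^{-1}\bar\Phi$, matching the paper's computation.
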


\begin{proof}
(i) follows from
\begin{equation}
   \begin{array}{l}
   (\Omega\Phi)_z=\Omega(\lambda_0 J+V_zV^{-1})\Phi
   =(\omega\lambda_0 J+V_zV^{-1})\Omega\Phi,\\
   \D(\Omega\Phi)_\zbar=\frac 1{\lambda_0}\Omega VJ^TV^{-1}\Phi
   =\frac 1{\omega\lambda_0}VJ^TV^{-1}\Omega\Phi.
   \end{array}
\end{equation}

(ii) follows from
\begin{equation}
   \begin{array}{l}
   \D(KV^{-1}\bar\Phi)_z=\frac 1{\bar\lambda_0}KV^{-1}VJ^TV^{-1}\bar\Phi
   -KV^{-1}V_zV^{-1}\bar\Phi\\
   \qquad\D=\Big(\frac 1{\bar\lambda_0}J+V_zV^{-1}\Big)KV^{-1}\bar\Phi,\\
   \D(KV^{-1}\bar\Phi)_\zbar
   =KV^{-1}(\bar\lambda_0
   J+V_\zbar V^{-1})\bar\Phi-KV^{-1}V_\zbar V^{-1}\bar\Phi\\
   \qquad\D=\bar\lambda_0VJ^TV^{-1}KV^{-1}\bar\Phi.\\
   \end{array}
\end{equation}

(iii) follows from
\begin{equation}
   \begin{array}{l}
   \D K\Phi_z=K(\lambda_0 J+V_zV^{-1})\Phi=-(-\lambda_0 J^T+(V_zV^{-1})^T)K\Phi,\\
   \D K\Phi_\zbar=\frac 1{\lambda_0}KVJ^TV^{-1}\Phi
   =-\frac 1{-\lambda_0}(VJ^TV^{-1})^TK\Phi.
   \end{array}
\end{equation}

(iv) follows from (i) and (iii). The lemma is proved.
\end{proof}

\section{Darboux transformation}\label{sect:DT}

A matrix
\begin{equation}
   T(z,\bar z,\lambda)=\sum_{j=0}^LT_j(z,\bar z)\lambda^{L-j},\quad
   T_0=I\label{eq:Gpoly}
\end{equation}
is called a Darboux matrix of degree $L$ for the Lax pair
(\ref{eq:LP}) if there exists a diagonal matrix $\widetilde
V(z,\zbar)$ satisfying
\begin{equation}
   \bar{\widetilde V}=\widetilde V,\quad
   \Omega \widetilde V\Omega^{-1}=\widetilde V,\quad
   \widetilde V^TK\widetilde V=K,\label{eq:Vred2}
\end{equation}
such that for any solution $\Phi$ of (\ref{eq:LP}),
$\widetilde\Phi=T\Phi$ satisfies
\begin{equation}
   \widetilde\Phi_z=(\lambda J+\widetilde P)\widetilde\Phi,\quad
   \widetilde\Phi_\zbar=\frac 1\lambda \widetilde Q\widetilde\Phi
   \label{eq:LP2}
\end{equation}
with
\begin{equation}
    \widetilde P=\widetilde V_z\widetilde V^{-1},\quad
    \widetilde Q=\widetilde VJ^T\widetilde V^{-1}.
\end{equation}

If a Darboux matrix is constructed, new solutions of the two
dimensional elliptic affine Toda equations can be obtained from a
known one.

When the symmetries in (\ref{eq:Vred2}) are not considered, the
Darboux transformation can be constructed according to
\cite{bib:Zak} (c.f. also \cite{bib:Cis}) as follows.

Let $\lambda_1,\cdots,\lambda_L,\hat\lambda_1,\cdots,\hat\lambda_L$
be distinct complex constants. Let $H_j$ be a solution of the Lax
pair with $\lambda=\lambda_j$, $\hat H_j$ be a solution of the
adjoint Lax pair with $\lambda=\hat \lambda_j$. Let
$\D\Gamma_{jk}=\frac{\hat H_j^TH_k}{\lambda_k-\hat\lambda_j}$
$(j,k=1,\cdots,L)$, $\check\Gamma=\Gamma^{-1}$. Then
\begin{equation}
   T(\lambda)=\prod_{i=1}^L(\lambda-\hat\lambda_i)
   \Big(I-\sum_{j,k=1}^L\frac{H_j\check\Gamma_{jk}\hat
   H_k^T}{\lambda-\hat\lambda_k}\Big)
\end{equation}
is a Darboux matrix without considering symmetries.

However, for our problem, the symmetries in (\ref{eq:Vred2}) must be
satisfied. When $N$ is odd, the construction of Darboux
transformation is comparatively easier. However, when $N$ is even,
the following problem is encountered. In fact, (i) and (iii) of
Lemma~\ref{lemma:sys} imply that if $\lambda_0$ is an eigenvalue of
the Lax pair~(\ref{eq:LP}), $\omega^{N/2}\lambda_0=-\lambda_0$ is an
eigenvalue of the adjoint Lax pair (\ref{eq:LPadj}). If we choose
$\lambda_j=\omega^{j-1}\mu$,
$\lambda_{N+j}=\omega^{j-1}\bar\mu^{-1}$,
$\hat\lambda_j=-\omega^{j-1}\mu$,
$\hat\lambda_{N+j}=-\omega^{j-1}\bar\mu^{-1}$ $(j=1,\cdots,N)$ with
$\mu\ne 0$ and $|\mu|\ne 1$ as introduced by Lemma~\ref{lemma:sys},
then
$\lambda_1,\cdots,\lambda_{2N},\hat\lambda_n,\cdots,\hat\lambda_{2N}$
are not distinct, so $\Gamma$ will not exist, let alone
$T(\lambda)$. On the other hand, if we choose
$\lambda_j=\omega^{j-1}\mu$, $\hat\lambda_j=-\omega^{j-1}\mu$
$(j=1,\cdots,N)$ with $\mu\ne 0$ and $|\mu|\ne 1$, then
$\lambda_1,\cdots,\lambda_{N},\hat\lambda_n,\cdots,\hat\lambda_{N}$
are distinct. However, in order to keep the complex orthogonal
symmetry, the solution $H_j$ cannot be chosen as column solution,
but $M\times M/2$ matrix solution, which makes the result
complicated when $M$ is even and impossible when $M$ is odd. In
order to solve this problem, a limit process is needed.

No matter $N$ is odd or even, the Darboux transformation can be
constructed uniformly as follows.

Let $\Phi(z,\bar z,\lambda)$ be a column solution of the Lax pair
(\ref{eq:LP}). When $N$ is even, $\Phi(z,\bar z,\lambda)$ should
satisfy an extra condition that
\begin{equation}
   \Phi(z_0,\bar z_0,\lambda)^TK\Omega^{N/2}\Phi(z_0,\bar z_0,\lambda)=0
\end{equation}
holds for all $\lambda$ and certain point $(z_0,\bar z_0)$. Then
according to (iv) of Lemma~\ref{lemma:sys},
\begin{equation}
   \Phi(z,\bar z,\lambda)^TK\Omega^{N/2}\Phi(z,\bar z,\lambda)=0
\end{equation}
holds identically.

Let
\begin{equation}
   \begin{array}{l}
   \lambda_j(\lambda)=\omega^{j-1}\lambda, \quad
   \Phi_j(z,\bar z,\lambda)=\Omega^{j-1}\Phi(z,\bar z,\lambda),\\
   \lambda_{N+j}(\lambda)=\omega^{j-1}\bar\lambda^{-1},\quad
   \Phi_{N+j}(z,\bar z,\lambda)=KV^{-1}\Omega^{-j+1}\overline{\Phi(z,\bar z,\lambda)}
   \end{array}
\end{equation}
$(j=1,2,\cdots,N)$. According to (i) and (ii) of
Lemma~\ref{lemma:sys}, each $\Phi_j(z,\bar z,\lambda)$ is a solution
of (\ref{eq:LP}) with $\lambda$ replaced by $\lambda_j(\lambda)$
$(j=1,\cdots,2N)$, and $\Phi_{j+1}=\Omega\Phi_j$,
$\Phi_{N+j+1}=\omega^{m-2}\Omega\Phi_{N+j}$ for $j=1,2,\cdots,N-1$.
Moreover, (iv) of Lemma~\ref{lemma:sys} implies that
$\Phi_j^TK\Omega^{N/2}\Phi_j=0$ holds for $j=1,2,\cdots,2N$ when $N$
is even.

Let $\mu$ be a non-zero complex number with $|\mu|\ne 1$. Denote
\begin{equation}
   H(z,\bar z)=\Phi(z,\bar z,\mu),\quad
   H'(z,\bar z)=\frac{\partial\Phi}{\partial\lambda}(z,\bar z,\lambda)\Big|_{\lambda=\mu}.
\end{equation}
Then
\begin{equation}
   H^TK\Omega^{N/2}H=0,\quad
   H^{\prime\,T}K\Omega^{N/2}H=-HK\Omega^{N/2}H'
   \label{eq:Hconstr}
\end{equation}
when $N$ is even. Note that the relations in (\ref{eq:Hconstr}) are
identities without any conditions when $N$ is even and $m$ is odd,
since $(K\Omega^{N/2})^T=(-1)^{2-m}K\Omega^{N/2}$.

Let
\begin{equation}
   \begin{array}{l}
   \mu_j=\omega^{j-1}\mu,\quad H_j=\Omega^{j-1}H,\\
   \mu_{N+j}=\omega^{j-1}\bar\mu^{-1},\quad H_{N+j}=KV^{-1}\Omega^{-j+1}\bar H
   \end{array}
\end{equation}
$(j=1,2,\cdots,N)$. Moreover, let $H_j'=\Omega^{j-1}H'$,
$H_{N+j}'=KV^{-1}\Omega^{-j+1}\bar H'$ $(j=1,2,\cdots,N)$ when $N$
is even.

Let $\Gamma=(\Gamma_{jk})_{2N\times 2N}$ where
\begin{equation}
   \Gamma_{jk}(z,\zbar)=\lim_{\lambda\to\mu}
   \frac{\Phi_j^T(z,\bar z,\lambda)K\Phi_k(z,\bar z,\mu)}{\lambda_j(\lambda)+\mu_k}.
   \label{eq:Gamma}
\end{equation}
By direct computation, we have
\begin{lemma}
If $N$ is odd, or $N$ is even and $\mu_j+\mu_k\ne 0$, then
\begin{equation}
   \Gamma_{jk}=\frac{H_j^TKH_k}{\mu_j+\mu_k}.\label{eq:GammaEven1}
\end{equation}
If $N=2n$ is even and $\mu_j+\mu_k=0$, then $k=j+n$ $(1\le j\le n)$
or $j=k+n$ $(1\le k\le n)$, and
\begin{equation}
   \begin{array}{l}
   \D\Gamma_{j,n+j}=\omega^{-j+1}H_j^{\prime\,T}KH_{n+j},\\
   \D\Gamma_{n+j,j}=-\omega^{-j+1}H_{n+j}^{\prime\,T}KH_j,\\
   \D\Gamma_{N+j,N+n+j}=-\omega^{-j+1}\bar\mu^2H_{N+j}^{\prime
   T}KH_{N+n+j},\\
   \D\Gamma_{N+n+j,N+j}=\omega^{-j+1}\bar\mu^2H_{N+n+j}^{\prime\,T}KH_{N+j}
   \end{array}
   \label{eq:GammaEven2}
\end{equation}
for $j=1,2,\cdots,n$. Moreover, for all $j,k=1,\cdots,2N$,
$\Gamma_{jk}$'s satisfy
\begin{equation}
   \Gamma_{jk}^T=\Gamma_{kj}, \label{eq:GammaSym}
\end{equation}
and
\begin{equation}
   \Gamma_{jk,z}=H_j^TKJH_k,\quad
   \Gamma_{jk,\bar z}=\frac 1{\mu_j\mu_k}H_j^TKQH_k.\label{eq:Gammaeq}
\end{equation}
\end{lemma}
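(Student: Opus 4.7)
The plan is to verify the four assertions—the formulas (\ref{eq:GammaEven1}) and (\ref{eq:GammaEven2}) for the matrix entries, the symmetry (\ref{eq:GammaSym}), and the evolution equations (\ref{eq:Gammaeq})—by unfolding the definition (\ref{eq:Gamma}) and systematically exploiting the symmetries of the Lax pair in (\ref{eq:PQred}) together with the commutation identity $\Omega^p K = \omega^{p(2-m)}K\Omega^{-p}$ derived from $\Omega K\Omega = \omega^{2-m}K$ and the identity $V^{-1}KV^{-1} = K$ coming from $V^TKV = K$.

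For (\ref{eq:GammaEven1}), when $\mu_j+\mu_k\ne 0$ the denominator in (\ref{eq:Gamma}) is regular at $\lambda=\mu$, so substituting the explicit forms $\Phi_j = \Omega^{j-1}\Phi$ or $\Phi_{N+j} = KV^{-1}\Omega^{-j+1}\bar\Phi$ and taking the limit yields the stated expression directly. For (\ref{eq:GammaEven2}), the zeroes of the denominator force $k\equiv j+n\mod N$ within each of the two blocks (mixed zeroes cannot occur because $|\mu|\ne 1$). The numerator vanishes at $\lambda=\mu$ precisely by the isotropy condition $\Phi(\lambda)^TK\Omega^{N/2}\Phi(\lambda)=0$ imposed on $\Phi$. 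Applying L'Hopital in $\lambda$ for the first block and in $\bar\lambda$ for the second block, where $\Phi_{N+j}$ and $\lambda_{N+j}$ are antiholomorphic in $\lambda$, then yields the formulas in (\ref{eq:GammaEven2}); the $\bar\mu^2$-factor in the last two cases arises from $d(\bar\lambda^{-1})/d\bar\lambda = -\bar\lambda^{-2}$. The phase factors $\omega^{(j-1)(2-m)}$ produced when commuting $\Omega$'s past $K$ are precisely what is needed to rewrite $H^{\prime T}K\Omega^n H$-type expressions in the claimed form $H_j^{\prime T}KH_{n+j}$.

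The symmetry (\ref{eq:GammaSym}) in the nondegenerate case is immediate from transposing the scalar $H_j^TKH_k$ using $K^T=K$. In the degenerate case it reduces to the scalar relation $H_j^{\prime T}KH_{n+j}+H_{n+j}^{\prime T}KH_j = 0$, which follows by differentiating $\Phi(\lambda)^TK\Omega^{N/2}\Phi(\lambda)\equiv 0$ in $\lambda$ at $\lambda=\mu$ to obtain $H^{\prime T}K\Omega^n H + H^TK\Omega^n H' = 0$, then translating via the same $\Omega$-commutation; the $\bar\lambda$-analogue handles the second-block pairs. For the evolution equations (\ref{eq:Gammaeq}) I would start from $H_{j,z} = (\mu_j J+P)H_j$ and $H_{j,\bar z} = \mu_j^{-1}QH_j$, whose transposes combine with $K$ using $J^TK = KJ$ (from $KJK = J^T$), $P^TK = -KP$, and $Q^TK = KQ$. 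In the nondegenerate case, the $P$-contributions in $\partial_z(H_j^TKH_k)$ cancel and the $J$-terms combine to $(\mu_j+\mu_k)H_j^TKJH_k$, producing the first half of (\ref{eq:Gammaeq}) after division; the $\bar z$-equation is analogous via $\mu_j^{-1}+\mu_k^{-1} = (\mu_j+\mu_k)/(\mu_j\mu_k)$. In the degenerate case, differentiating the Lax pair with respect to $\lambda$ gives $(H'_j)_z = \omega^{j-1}JH_j+(\mu_j J+P)H'_j$; the usual cancellation persists because $\mu_j+\mu_{n+j}=0$, and only the extra $\omega^{j-1}JH_j$-term survives, which after the prefactor $\omega^{-j+1}$ reproduces $H_j^TKJH_{n+j}$. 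For $\bar z$, the extra $-\omega^{1-j}\mu^{-2}QH_j$-term from differentiating $\lambda_j^{-1}$ delivers the required factor $1/(\mu_j\mu_{n+j}) = -\omega^{2(1-j)}/\mu^2$.

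The main obstacle is bookkeeping rather than any deep structural difficulty: keeping track of the $\omega$-phases that arise when $\Omega$'s are pushed through $K$, and correctly distinguishing holomorphic from antiholomorphic dependence on $\lambda$ in the L'Hopital step for the $N+j$ block, where factors of $\bar\mu^2$ emerge. Once $\Omega^pK = \omega^{p(2-m)}K\Omega^{-p}$, $V^{-1}KV^{-1} = K$, and the adjoint-symmetry relations of (\ref{eq:PQred}) are applied consistently, all four parts reduce to mechanical algebra.
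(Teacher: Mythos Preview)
Your proposal is correct and is precisely the direct computation the paper asserts but does not spell out (the paper states only ``By direct computation, we have'' before the lemma). All the ingredients you list---the commutation $\Omega^pK=\omega^{p(2-m)}K\Omega^{-p}$, the identity $V^{-1}KV^{-1}=K$, the L'H\^opital step with the holomorphic/antiholomorphic distinction in the $N+j$ block, and the use of (\ref{eq:Hconstr}) for the symmetry in the degenerate case---are exactly what is required, and the phase bookkeeping you outline checks out.
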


Denote $\check\Gamma=\Gamma^{-1}$ when $\Gamma$ is invertible. Let
\begin{equation}
   T(\lambda)=\prod_{i=1}^{2N}(\lambda+\mu_i)
   \Big(I-\sum_{j,k=1}^{2N}\frac{H_j\check\Gamma_{jk}H_k^TK}{\lambda+\mu_k}\Big),
   \label{eq:G}
\end{equation}
then $T(\lambda)$ is of form (\ref{eq:Gpoly}) with $L=2N$.

It can be checked directly that
\begin{equation}
   T(\lambda)^{-1}=\prod_{i=1}^{2N}(\lambda+\mu_i)^{-1}
   \Big(I+\sum_{j,k=1}^{2N}\frac{H_j\check\Gamma_{jk}H_k^TK}{\lambda-\mu_j}\Big).
   \label{eq:Ginv}
\end{equation}
Moreover, from (\ref{eq:G}),
\begin{equation}
   \begin{array}{l}
   \D T_1=\sum_{i=1}^{2N}\mu_i-\sum_{j,k=1}^{2N}
   H_j\check\Gamma_{jk}H_k^TK,\\
   \D T_{2N}=\prod_{i=1}^{2N}\mu_i
   \Big(I-\sum_{j,k=1}^{2N}\frac{H_j\check\Gamma_{jk}H_k^TK}{\mu_k}\Big).
   \end{array}\label{eq:TL}
\end{equation}

\begin{lemma}\label{lemma:DTPQV}
Suppose $T(\lambda)$ is given by (\ref{eq:G}), then (\ref{eq:LP2})
holds where
\begin{equation}
   \widetilde P=P-[J,T_1],\quad \widetilde
   Q=T_{2N}QT_{2N}^{-1}.\label{eq:DTPQ}
\end{equation}
Moreover, if the diagonal matrix $\widetilde V$ satisfies
\begin{equation}
   \widetilde V=\rho^{-1}T_{2N}V\label{eq:DTV}
\end{equation}
where $\rho$ is a non-zero complex constant, then $\widetilde
P=\widetilde V_z\widetilde V^{-1}$ and $\widetilde Q=\widetilde
VJ^T\widetilde V^{-1}$ satisfy (\ref{eq:DTPQ}).
\end{lemma}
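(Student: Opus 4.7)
The plan is to verify that $T$ intertwines the Lax pairs (\ref{eq:LP}) and (\ref{eq:LP2}) by splitting into the $z$-part and the $\zbar$-part, to read off the formulas for $\widetilde P$ and $\widetilde Q$ from the top-degree coefficients, and then to deduce the ``Moreover'' assertion by specialising the $z$-identity to $\lambda = 0$.

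I would set
\begin{equation*}
   A(\lambda) := T_z + T(\lambda J + P) - (\lambda J + \widetilde P) T, \qquad
   B(\lambda) := \lambda T_\zbar + TQ - \widetilde Q T,
\end{equation*}
so that the first assertion of the lemma is equivalent, for every solution $\Phi$ of (\ref{eq:LP}), to $A \equiv 0$ and $B \equiv 0$. Because $T$ is polynomial of degree $2N$ in $\lambda$ with $T_0 = I$, $A$ has degree at most $2N+1$ and $B$ has degree at most $2N$. The $\lambda^{2N+1}$-coefficient of $A$ is $[J,I] = 0$, and the $\lambda^{2N}$-coefficient is $P - \widetilde P - [J, T_1]$, forcing $\widetilde P = P - [J, T_1]$; the constant term of $B$ is $T_{2N} Q - \widetilde Q T_{2N}$, forcing $\widetilde Q = T_{2N} Q T_{2N}^{-1}$.

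The substantive step is to show the remaining coefficients of $A$ and $B$ also vanish. Here I would pass to the rational function
\begin{equation*}
   A(\lambda) T(\lambda)^{-1} = T_z T^{-1} + T(\lambda J + P) T^{-1} - (\lambda J + \widetilde P)
\end{equation*}
and prove it is identically zero via (i) boundedness at $\lambda = \infty$ (ensured by the top-coefficient computation) and (ii) vanishing of its residues at the potential poles $\lambda = \mu_j$ inherited from $T^{-1}$ by (\ref{eq:Ginv}). This residue cancellation is the place where all the structural inputs meet: the Lax pair $H_{j,z} = (\mu_j J + P) H_j$, the derivative formula $\Gamma_{jk,z} = H_j^T K J H_k$ from (\ref{eq:Gammaeq}), the symmetry $KPK = -P^T$ from (\ref{eq:PQred}) (which transports $K$ across $P$ in $H_j^T K P$), and the Leibniz rule $\check\Gamma_z = -\check\Gamma\,\Gamma_z\,\check\Gamma$ are to be combined to cancel the three contributions at each $\mu_j$ coming from $T_z$, from the $\lambda J$ piece, and from $P$. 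The argument for $B$ is parallel, with $\Gamma_{jk,\zbar} = (\mu_j\mu_k)^{-1} H_j^T K Q H_k$, $H_{j,\zbar} = \mu_j^{-1} Q H_j$, and $KQK = Q^T$ playing the analogous roles.

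Finally, the ``Moreover'' part follows quickly once the first assertion is in hand. Under the diagonality assumption $\widetilde V = \rho^{-1} T_{2N} V$ one has
\begin{equation*}
   \widetilde V J^T \widetilde V^{-1} = T_{2N} V J^T V^{-1} T_{2N}^{-1} = T_{2N} Q T_{2N}^{-1} = \widetilde Q,
\end{equation*}
while $\widetilde V_z \widetilde V^{-1} = T_{2N,z} T_{2N}^{-1} + T_{2N} P T_{2N}^{-1}$; evaluating $A(\lambda) \equiv 0$ at $\lambda = 0$ gives $T_{2N,z} + T_{2N} P = \widetilde P T_{2N}$, which rearranges to $\widetilde V_z \widetilde V^{-1} = \widetilde P$. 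The main obstacle in the whole plan is the residue bookkeeping at $\lambda = \mu_j$; no new idea enters there, but one must combine the Lax-pair derivatives, the $\Gamma$-derivatives, and the $K$-symmetries with some care before the three sources of residues collapse.
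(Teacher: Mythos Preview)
Your overall plan matches the paper's proof: the paper also asserts that the gauge identities (\ref{eq:DTgauge}) hold ``by direct computation'' using (\ref{eq:Gammaeq}), and then obtains the ``Moreover'' part exactly as you do, by evaluating the $z$-identity at $\lambda=0$ and comparing with the derivative of $\widetilde V=\rho^{-1}T_{2N}V$. Your residue/Liouville framing is a perfectly standard way to organise that direct computation.

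There is, however, one genuine oversight in your residue bookkeeping. From (\ref{eq:Ginv}), $T(\lambda)^{-1}$ carries not only the simple poles at $\lambda=\mu_j$ coming from the sum but also poles at $\lambda=-\mu_k$ coming from the scalar prefactor $\prod_i(\lambda+\mu_i)^{-1}$. Equivalently, writing $T=p\,\tau$ with $p=\prod_i(\lambda+\mu_i)$ and $\tau=I-\sum_{j,k}H_j\check\Gamma_{jk}H_k^TK/(\lambda+\mu_k)$, one has $A\,T^{-1}=a\,\tau^{-1}$ where $a:=A/p$ has simple poles at $-\mu_k$ while $\tau^{-1}$ has simple poles at $\mu_j$. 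A Liouville argument on $A\,T^{-1}$ therefore requires checking residues at \emph{both} families of points. The check at $\mu_j$ reduces to $\tau(\mu_j)H_j=0$; the check at $-\mu_k$ uses $H_k^TK\,\tau^{-1}(-\mu_k)=0$ together with the identities you already listed. Nothing new is needed beyond your ingredient list, but the second family cannot be omitted. A cleaner variant avoids $T^{-1}$ altogether: apply Liouville directly to $a=\tau_z+\tau(\lambda J+P)-(\lambda J+\widetilde P)\tau$, whose only possible poles are at $-\mu_k$; the residues there vanish by your listed identities, and $a\to 0$ at infinity by the top-coefficient computation.
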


\begin{proof}
With (\ref{eq:Gammaeq}) and (\ref{eq:DTPQ}), it can be verified by
direct computation that
\begin{equation}
   \begin{array}{l}
   (\lambda J+\widetilde P)T
   -T(\lambda J+P)-T_z=0,\\
   \lambda^{-1}\widetilde QT
   -\lambda^{-1} T Q-T_{\bar z}=0
   \end{array}\label{eq:DTgauge}
\end{equation}
hold for all $\lambda$, which are equivalent to (\ref{eq:LP2}).

Clearly, $\widetilde VJ^T\widetilde V^{-1}=T_{2N}QT_{2N}^{-1}$ is
true. It is only necessary to prove that $\widetilde V_z\widetilde
V^{-1}=P-[J,T_1]$ holds. Taking $\lambda=0$ in
\begin{equation}
   T_z+T(\lambda J+P)=(\lambda J+P-[J,T_1])T
\end{equation}
which is the first equation of (\ref{eq:DTgauge}), we get
\begin{equation}
   T_{{2N},z}+T_{2N}P=(P-[J,T_1])T_{2N}.
\end{equation}
On the other hand,
\begin{equation}
   T_{{2N},z}+T_{2N}P=\widetilde V_z\widetilde V^{-1} T_{2N}
\end{equation}
holds by differentiating (\ref{eq:DTV}). Hence $\widetilde
V_z\widetilde V^{-1}=P-[J,T_1]$. The lemma is proved.
\end{proof}

\begin{theorem}\label{thm:DT_sys}
Suppose $V$ satisfies $\bar V=V$, $\Omega V\Omega^{-1}=V$ and
$V^TKV=K$, then $T(\lambda)$ satisfies
\begin{eqnarray}
   &&\D T(-\lambda)^TKT(\lambda)=(\lambda^{2N}-\mu^{2N})(\lambda^{2N}-\bar\mu^{-2N})K,
   \label{eq:DT_adj}\\
   &&\D\Omega
   T(\lambda)\Omega^{-1}=T(\omega\lambda),\label{eq:DT_cyc}\\
   &&\D\lambda^{2N}\overline{T(\bar\lambda^{-1})}=KV^{-1}T_{2N}^{-1}T(\lambda)VK.
   \label{eq:DT_rea}
\end{eqnarray}

Especially, $T_{2N}$ satisfies the relations
$T_{2N}^TKT_{2N}=(\mu/\bar\mu)^{2N}K$, $\Omega
T_{2N}\Omega^{-1}=T_{2N}$ and
$T_{2N}^*=(\bar\mu/\mu)^{2N}V^{-1}T_{2N}V$ where ${}^*$ refers to
Hermitian conjugation. Therefore, $(T_{2N})_{jk}\ne 0$ only if
$j=k$, and the diagonal part of $(\bar\mu/\mu)^N(T_{2N})$ is real.

Moreover, if $(T_{2N})_{jj}$ is diagonal whenever $r_j>1$, then
$(\bar\mu/\mu)^NT_{2N}$ is a real diagonal matrix, and $\widetilde
V=(\bar\mu/\mu)^NT_{2N}V$ satisfies $\bar{\widetilde V}=\widetilde
V$, $\Omega\widetilde V\Omega^{-1}=\widetilde V$ and $\widetilde
V^TK\widetilde V=K$.
\end{theorem}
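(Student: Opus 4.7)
The plan is to establish the three functional identities (\ref{eq:DT_adj})--(\ref{eq:DT_rea}) for $T(\lambda)$ directly from the explicit form (\ref{eq:G}), and then to read off the $T_{2N}$ relations and the block-and-reality conclusions by specialization. I would handle the cyclic symmetry (\ref{eq:DT_cyc}) first: conjugate (\ref{eq:G}) by $\Omega$ and perform the cyclic index shift $j\mapsto j+1$, $k\mapsto k+1$ within each batch. Using $\Omega H_j = H_{j+1}$ in the first batch and $\Omega H_{N+j}=\omega^{2-m}H_{N+j+1}$ in the second (both noted immediately after the definition of $\Phi_j$), together with $\Omega^T=\Omega$ and $K\Omega^{-1}=\omega^{m-2}\Omega K$ (derived from $\Omega K\Omega=\omega^{2-m}K$), every rank-one summand $H_j\check\Gamma_{jk}H_k^TK$ transforms into a shifted summand whose accumulated phases exactly cancel the $\omega^{-1}$ produced by $\lambda+\mu_k\mapsto\omega\lambda+\mu_{k+1}$. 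The prefactor $\prod_i(\lambda+\mu_i)$ is invariant because $\omega^{2N}=1$, and the whole expression becomes $T(\omega\lambda)$.

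For the reality relation (\ref{eq:DT_rea}), I would substitute $\bar\lambda^{-1}$ for $\lambda$ in (\ref{eq:G}) and take complex conjugates; Lemma~\ref{lemma:sys}(ii), applied to each $\Phi_j(\bar\lambda^{-1})$, rewrites $\overline{\Phi(\bar\lambda^{-1})}$ in terms of $VKH_{N+1}(\lambda)$-type objects, so the two batches $\{H_j\}$ and $\{H_{N+j}\}$ get swapped with a $VK$ on the left and a $KV^{-1}$ on the right (using $K^2=I$). Multiplying through by $\lambda^{2N}$ turns the $\bar\lambda^{-1}$-polynomial prefactor into a scalar multiple of $\prod_i(\lambda+\mu_i)$, and the extra right factor $T_{2N}^{-1}$ is forced by matching constant terms, using $KVK=V^{-1}$ (which holds because $V$ is diagonal real with $V^TKV=K$). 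For the adjoint relation (\ref{eq:DT_adj}), the crucial input is $\Gamma^T=\Gamma$ from (\ref{eq:GammaSym}), hence $\check\Gamma_{jk}^T=\check\Gamma_{kj}$. Transposing (\ref{eq:G}), multiplying by $K$, and substituting $\lambda\mapsto-\lambda$ converts each summand $H_j\check\Gamma_{jk}H_k^TK$ into $KH_k\check\Gamma_{kj}H_j^T$ at the shifted pole $\lambda=\mu_k$; after relabeling this identifies $T(-\lambda)^TK$ with $f(\lambda)KT(\lambda)^{-1}$ for some scalar polynomial $f$, and matching the leading $\lambda^{4N}$ coefficient and constant term forces $f(\lambda)=(\lambda^{2N}-\mu^{2N})(\lambda^{2N}-\bar\mu^{-2N})$. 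I expect the main obstacle to be the $N$-even case: for pairs $(j,k)$ with $\mu_j+\mu_k=0$ the entries $\Gamma_{jk}$ are given by the derivative formula (\ref{eq:GammaEven2}), so the naive pole cancellation in $T(\lambda)^{-1}$ at $\lambda=-\mu_j$ fails and must be rescued by the orthogonality condition $\Phi^TK\Omega^{N/2}\Phi=0$ together with its derivative consequences (\ref{eq:Hconstr}).

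Once the three identities hold, evaluating (\ref{eq:DT_cyc}) and (\ref{eq:DT_adj}) at $\lambda=0$ immediately gives $\Omega T_{2N}\Omega^{-1}=T_{2N}$ and $T_{2N}^TKT_{2N}=(\mu/\bar\mu)^{2N}K$; matching the $\lambda^{2N}$ coefficients in (\ref{eq:DT_rea}) gives $\bar T_{2N}=KV^{-1}T_{2N}^{-1}VK$, and combining the three using $KVK=V^{-1}$ produces $T_{2N}^*=(\bar\mu/\mu)^{2N}V^{-1}T_{2N}V$. The $\Omega$-invariance multiplies the $(j,k)$-block of $T_{2N}$ by $\omega^{k-j}$, forcing $(T_{2N})_{jk}=0$ for $j\neq k$. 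Setting $W=(\bar\mu/\mu)^NT_{2N}$, the Hermitian relation becomes $W^*=V^{-1}WV$, which pins each scalar block ($r_j=1$) down as real, and under the extra hypothesis that $W_{jj}$ is diagonal whenever $r_j=2$ the same relation pins its diagonal entries down as real as well. Thus $\widetilde V=WV$ is real diagonal; $\bar{\widetilde V}=\widetilde V$ is built in, $\Omega\widetilde V\Omega^{-1}=\widetilde V$ follows from the $\Omega$-invariance of both $W$ and $V$, and $\widetilde V^TK\widetilde V=(\bar\mu/\mu)^{2N}V^TT_{2N}^TKT_{2N}V=V^TKV=K$.
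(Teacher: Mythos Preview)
Your proposal is correct and follows essentially the same route as the paper's own proof: all three identities are derived from the explicit formula (\ref{eq:G}) together with (\ref{eq:Ginv}), using the symmetry $\check\Gamma^T=\check\Gamma$ for (\ref{eq:DT_adj}), the shift rules $\Omega H_j=H_{j+1}$, $\Omega H_{N+j}=\omega^{2-m}H_{N+j+1}$ for (\ref{eq:DT_cyc}), and the swap $\bar H_j=KV^{-1}H_{N+j}$ for (\ref{eq:DT_rea}); the $T_{2N}$ relations and the $\widetilde V$ conclusions are then read off exactly as you describe. The only notable difference is one of bookkeeping: the paper carries out the cyclic and reality steps by first writing $\Gamma=\left(\begin{smallmatrix}A&B\\C&D\end{smallmatrix}\right)$, deriving closed transformation rules for $A,B,C,D$ under the shift matrix $E$ and under complex conjugation, inverting those to get the corresponding rules for $\check A,\check B,\check C,\check D$, and only then assembling $\Omega T(\lambda)\Omega^{-1}$ and $\lambda^{2N}T_{2N}KV^{-1}\overline{T(\bar\lambda^{-1})}VK$ --- so what you phrase as ``accumulated phases exactly cancel'' is made fully explicit there. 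Your anticipated $N$-even obstacle is real but already absorbed into the verification of (\ref{eq:Ginv}) itself (where $H_k^TKH_l=0$ whenever $\mu_k+\mu_l=0$, by the constraint (\ref{eq:Hconstr})), rather than appearing separately in the proof of the theorem.
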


\begin{proof}
With $\check\Gamma_{jk}^T=\check\Gamma_{kj}$, (\ref{eq:DT_adj})
follows directly from (\ref{eq:G}), (\ref{eq:Ginv}) and the fact
$\D\prod_{i=1}^{2N}(\mu_i^2-\lambda^2)
=(\lambda^{2N}-\mu^{2N})(\lambda^{2N}-\bar\mu^{-2N})$.

Denote
\begin{equation}
   \Gamma=\left(\begin{array}{cc}A&B\\C&D\end{array}\right)
\end{equation}
where $A$, $B$, $C$, $D$ are $N\times N$ matrices. For
$j,k=1,\cdots,2N$,
\begin{equation}
   \begin{array}{l}
   \D A_{jk}=\frac{H_j^TKH_k}{\mu_j+\mu_k}
   =\frac{H^T\Omega^{j-1}K\Omega^{k-1}H}{\omega^{j-1}\mu+\omega^{k-1}\mu}
   =\omega^{(1-m)(j-1)}\frac{H^TK\Omega^{k-j}H}{(1+\omega^{k-j})\mu}\\
   \qquad(\hbox{if }\omega^{k-j}\ne -1),\\
   \D B_{jk}=\frac{H_j^TKH_{N+k}}{\mu_j+\mu_{N+k}}
   =\frac{H^T\Omega^{j-1}V^{-1}\Omega^{-k+1}\bar H}{\omega^{j-1}\mu+\omega^{k-1}\bar\mu^{-1}}
   =\omega^{-(j-1)}\frac{H^TV^{-1}\Omega^{j-k}\bar H}{\mu+\omega^{k-j}\bar\mu^{-1}},\\
   \D C_{jk}=\frac{H_{N+j}^TKH_k}{\mu_{N+j}+\mu_k}
   =\frac{H^*\Omega^{-j+1}V^{-1}\Omega^{k-1}H}{\omega^{j-1}\bar\mu^{-1}+\omega^{k-1}\mu}
   =\omega^{-(j-1)}\frac{H^* V^{-1}\Omega^{k-j}H}{\bar\mu^{-1}+\omega^{k-j}\mu},\\
   \D D_{jk}=\frac{H_{N+j}^TKH_{N+k}}{\mu_{N+j}+\mu_{N+k}}
   =\frac{H^*\Omega^{-j+1}V^{-1}KV^{-1}\Omega^{-k+1}\bar H}
   {\omega^{j-1}\bar\mu^{-1}+\omega^{k-1}\bar\mu^{-1}}\\
   \qquad\D=\omega^{(m-3)(j-1)}\frac{H^* K\Omega^{j-k}\bar
   H}{(1+\omega^{k-j})\bar\mu^{-1}}
   \qquad(\hbox{if }\omega^{k-j}\ne -1).
   \end{array}\label{ABCD1}
\end{equation}
$\omega^{k-j}=-1$ occurs only when $N=2n$ is even and $j-k\equiv n\mod
N$. In this case,
\begin{equation}
   \begin{array}{l}
   \D A_{j,j+\varepsilon n}=\omega^{(1-m)(j-1)}H^{\prime\,T}K\Omega^nH,\\
   \D D_{j,j+\varepsilon n}=-\omega^{(m-3)(j-1)}\bar\mu^2H^{\prime\,*}K\Omega^n\bar H
   \end{array}\label{ABCD2}
\end{equation}
where $\varepsilon=1$ for $j=1,\cdots,n$ and $\varepsilon=-1$ for $j=n+1,\cdots,2n$.

Hence $\Gamma$ satisfies
\begin{equation}
   \begin{array}{l}
   \left(\begin{array}{cc}E\\&E\end{array}\right)\Gamma
   \left(\begin{array}{cc}E\\&E\end{array}\right)^{-1}
   =\omega^{-1}\left(\begin{array}{cc}1\\&\omega^{m-2}\end{array}\right)
   \Gamma\left(\begin{array}{cc}\omega^{2-m}\\&1\end{array}\right)
   \end{array}
\end{equation}
where $\D E=(\delta_{j,k-1})_{1\le j,k\le N}$. Denote
$\D\Gamma^{-1}=\left(\begin{array}{cc}\check A&\check B\\\check
C&\check D\end{array}\right)$ where $\check A$, $\check B$, $\check
C$, $\check D$ are $N\times N$ matrices, then
\begin{equation}
   \left(\begin{array}{cc}E\\&E\end{array}\right)\Gamma^{-1}
   \left(\begin{array}{cc}E\\&E\end{array}\right)^{-1}
   =\omega\left(\begin{array}{cc}\omega^{m-2}\\&1\end{array}\right)
   \Gamma^{-1}\left(\begin{array}{cc}1\\&\omega^{2-m}\end{array}\right).
\end{equation}
Hence for $j,k=1,\cdots,N$,
\begin{equation}
   \begin{array}{l}
   \check A_{j-1,k-1}=\omega^{1-m}\check A_{jk},\quad
   \check B_{j-1,k-1}=\omega^{-1}\check B_{jk},\\
   \check C_{j-1,k-1}=\omega^{-1}\check C_{jk},\quad
   \check D_{j-1,k-1}=\omega^{m-3}\check D_{jk}.
   \end{array}\label{eq:Gammainv}
\end{equation}
On the other hand, by the definition of $H_j$'s,
\begin{equation}
   \begin{array}{l}
   \Omega H_j=H_{j+1},\quad \Omega H_{N+j}=\omega^{2-m}H_{N+j+1},\\
   H_k^TK\Omega^{-1}=\omega^{m-2}H_{k+1}^TK,\quad
   H_{N+k}^TK\Omega^{-1}=H_{N+k+1}^TK
   \end{array}\label{eq:OmegaH}
\end{equation}
for $j,k=1,\cdots,N-1$. (\ref{eq:Gammainv}) and (\ref{eq:OmegaH})
lead to
\begin{equation}
   \begin{array}{l}
   \D \Omega T(\lambda)\Omega^{-1}=\prod_{i=1}^{2N}(\lambda+\mu_i)
   \Omega\Big(I-\sum_{j,k=1}^N\frac{H_j\check A_{jk}H_k^TK}{\lambda+\mu_k}
   -\sum_{j,k=1}^N\frac{H_j\check B_{jk}H_{N+k}^TK}{\lambda+\mu_{N+k}}\\
   \D\quad\qquad-\sum_{j,k=1}^N\frac{H_{N+j}\check C_{jk}H_k^TK}{\lambda+\mu_k}
   -\sum_{j,k=1}^N\frac{H_{N+j}\check D_{jk}H_{N+k}^TK}{\lambda+\mu_{N+k}}
   \Big)\Omega^{-1}\\
   \D=\prod_{i=1}^{2N}(\lambda+\mu_i)
   \Big(I-\omega^{m-2}\sum_{j,k=1}^N
   \frac{H_{j+1}\check A_{jk}H_{k+1}^TK}{\lambda+\mu_k}
   -\sum_{j,k=1}^N\frac{H_{j+1}\check B_{jk}H_{N+k+1}^TK}
   {\lambda+\mu_{N+k}}\\
   \D\quad\qquad-\sum_{j,k=1}^N\frac{H_{N+j+1}\check C_{jk}H_{k+1}^TK}
   {\lambda+\mu_k}
   -\omega^{2-m}\sum_{j,k=1}^N
   \frac{H_{N+j+1}\check D_{jk}H_{N+k+1}^TK}
   {\lambda+\mu_{N+k}}
   \Big)\\
   \D=\prod_{i=1}^{2N}(\omega\lambda+\mu_i)
   \Big(I-\omega^{m-1}\sum_{j,k=1}^N
   \frac{H_j\check A_{j-1,k-1}H_k^TK}{\omega\lambda+\mu_k}
   -\omega\!\!\sum_{j,k=1}^N\frac{H_j\check B_{j-1,k-1}H_{N+k}^TK}
   {\omega\lambda+\mu_{N+k}}\\
   \D\quad\qquad-\omega\sum_{j,k=1}^N\frac{H_{N+j}\check C_{j-1,k-1}H_k^TK}
   {\omega\lambda+\mu_k}
   -\omega^{3-m}\sum_{j,k=1}^N
   \frac{H_{N+j}\check D_{j-1,k-1}H_{N+k}^TK}
   {\omega\lambda+\mu_{N+k}}
   \Big)\\
   =T(\omega\lambda).
   \end{array}
\end{equation}
This proves (\ref{eq:DT_cyc}).

By (\ref{ABCD1}) and (\ref{ABCD2}),
\begin{equation}
   \begin{array}{l}
   \bar A_{jk}=\omega^{j+k-2}\bar\mu^{-2}D_{jk},\quad
   \bar B_{jk}=\omega^{j+k-2}\bar\mu^{-1}\mu C_{jk},\\
   \bar C_{jk}=\omega^{j+k-2}\bar\mu^{-1}\mu B_{jk},\quad
   \bar D_{jk}=\omega^{j+k-2}\mu^2A_{jk},
   \end{array}
\end{equation}
i.e.
\begin{equation}
   \begin{array}{l}
   \bar A=\bar\mu^{-2}\Omega^{-1} D\Omega^{-1},\quad
   \bar B=\bar\mu^{-1}\mu\Omega^{-1} C\Omega^{-1},\quad\\
   \bar C=\bar\mu^{-1}\mu\Omega^{-1} B\Omega^{-1},\quad
   \bar D=\mu^2\Omega^{-1} A\Omega^{-1}.
   \end{array}
\end{equation}
Hence
\begin{equation}
   \bar\Gamma=\left(\begin{array}{cc}&\bar\mu^{-1}\Omega^{-1}\\\mu\Omega^{-1}\end{array}\right)
   \Gamma\left(\begin{array}{cc}&\mu\Omega^{-1}\\\bar\mu^{-1}\Omega^{-1}\end{array}\right),
\end{equation}
\begin{equation}
   \bar\Gamma^{-1}=\left(\begin{array}{cc}&\bar\mu\Omega\\\mu^{-1}\Omega\end{array}\right)
   \Gamma^{-1}
   \left(\begin{array}{cc}&\mu^{-1}\Omega\\\bar\mu\Omega\end{array}\right).
\end{equation}
Written in components,
\begin{equation}
   \begin{array}{l}
   \bar{\check A}_{jk}=\omega^{-j-k+2}\bar\mu^2{\check D}_{jk},\quad
   \bar{\check B}_{jk}=\omega^{-j-k+2}\mu^{-1}\bar\mu{\check C}_{jk},\\
   \bar{\check C}_{jk}=\omega^{-j-k+2}\mu^{-1}\bar\mu{\check B}_{jk},\quad
   \bar{\check D}_{jk}=\omega^{-j-k+2}\mu^{-2}{\check A}_{jk},
   \end{array}
\end{equation}
i.e.
\begin{equation}
   \bar{\check\Gamma}_{jk}=\frac
   1{\mu_{N+j}\mu_{N+k}}\check\Gamma_{N+j,N+k}.
   \label{eq:Gammabar}
\end{equation}

By the definition of $H_j$'s,
\begin{equation}
   \begin{array}{l}
   \bar H_j=KV^{-1}H_{N+j},\quad
   H_k^* K=H_{N+k}^TV^{-1}\quad(j,k=1,\cdots,2N).
   \end{array}\label{eq:Omegabar}
\end{equation}

By (\ref{eq:Gammabar}), (\ref{eq:Omegabar}) and the fact
$\mu_{N+k}\bar\mu_k=1$ for $k=1,\cdots,2N$,
\begin{equation}
   \begin{array}{l}
   \D\lambda^{2N}T_{2N}KV^{-1}\overline{T(\bar\lambda^{-1})}VK\\
   \D=\lambda^{2N}\prod_{i=1}^{2N}(\lambda^{-1}+\bar\mu_i)T_{2N}KV^{-1}
   \Big(I-\sum_{j,k=1}^{2N}\frac{\bar H_j\bar{\check\Gamma}_{jk}H_k^* K}
   {\lambda^{-1}+\bar\mu_k}\Big)VK\\
   \D=\prod_{i=1}^{2N}(1+\bar\mu_i\lambda)
   T_{2N}\Big(I-\sum_{j,k=1}^{2N}\frac{H_{N+j}\check\Gamma_{N+j,N+k}H_{N+k}^TK}
   {\mu_{N+j}\mu_{N+k}(\lambda^{-1}+\bar\mu_k)}
   \Big)\\
   \D=\prod_{i=1}^{2N}(\lambda+\mu_i)
   \Big(I-\sum_{a,b=1}^{2N}\frac{H_{a}\check\Gamma_{ab}H_{b}^TK}
   {\mu_{b}}\Big)
   \Big(I-\sum_{j,k=1}^{2N}\frac{\lambda H_{j}\check\Gamma_{jk}H_{k}^TK}
   {\mu_{j}(\lambda+\mu_k)}
   \Big)\\
   \D=\prod_{i=1}^{2N}(\lambda+\mu_i)
   \Big(I-\sum_{j,k=1}^{2N}\frac{H_{j}\check\Gamma_{jk}H_{k}^TK}
   {\lambda+\mu_k}
   \Big)=T(\lambda).
   \end{array}
\end{equation}
Hence (\ref{eq:DT_rea}) holds.

Now we consider the symmetries of $T_{2N}$.

$T_{2N}^TKT_{2N}=(\mu/\bar\mu)^{2N}K$ and $\Omega
T_{2N}\Omega^{-1}=T_{2N}$ are derived by setting $\lambda=0$ in
(\ref{eq:DT_adj}) and (\ref{eq:DT_cyc}). Taking the coefficient of
$\lambda^{2N}$ in (\ref{eq:DT_rea}), we get
$\bar{T}_{2N}=KV^{-1}T_{2N}^{-1}VK$. Then
$T_{2N}^*=(\bar\mu/\mu)^{2N}V^{-1}T_{2N}V$ is obtained by using
$T_{2N}^TKT_{2N}=(\mu/\bar\mu)^{2N}K$.

If $(T_{2N})_{jj}$ is diagonal whenever $r_j>1$, then
$T_{2N}^*=(\bar\mu/\mu)^{2N}V^{-1}T_{2N}V$ implies that
$(\bar\mu/\mu)^NT_{2N}$ is a real diagonal matrix. In this case,
$\widetilde V$ is a real diagonal matrix satisfying
$\Omega\widetilde V\Omega^{-1}=\widetilde V$, $\widetilde
V^TK\widetilde V=K$. The theorem is proved.
\end{proof}

According to Theorem~\ref{thm:DT_sys}, to verify that $T(\lambda)$
is a Darboux matrix keeping all the symmetries in (\ref{eq:Vred}),
it is only necessary to verify that $(T_{2N})_{jj}$ is diagonal
whenever $r_j>1$. In the rest of this section, we will show this
fact and simplify the expression of the solutions .

\begin{lemma}\label{lemma:sum}
Let $N$ be a positive integer, $p$ be an integer, $a$ be a complex
number. Let $\omega=\exp(2\pi\I/N)$. For any integer $k$, let
$\{k\}$ be the remainder of $k$ divided by $N$. Then
\begin{equation}
   \sum_{i=1}^N\frac{\omega^{-p(i-1)}}{1-a\omega^{i-1}}=\frac{Na^{\{p\}}}{1-a^N}
   \label{eq:lemma_sum_1}
\end{equation}
if $a^N\ne 1$, and
\begin{equation}
   \sum_{i=1\atop i\ne k}^N\frac{\omega^{-p(i-1)}}{1-a\omega^{i-1}}
   =\Big(\frac{N-1}2-\{p\}\Big)a^{\{p\}} \label{eq:lemma_sum_2}
\end{equation}
if $a=\omega^{-k+1}$.
\end{lemma}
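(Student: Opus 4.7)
My plan is to establish \eqref{eq:lemma_sum_1} first by a generating-function argument, and then extract \eqref{eq:lemma_sum_2} from it by a Laurent expansion at the offending root of unity. The single technical input is the standard orthogonality relation
$$\sum_{i=1}^N\omega^{(\ell-p)(i-1)}=\begin{cases}N&\ell\equiv p\mod N,\\ 0&\text{otherwise.}\end{cases}$$

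For \eqref{eq:lemma_sum_1}, I would first assume $|a|<1$ and expand each summand as a geometric series:
$$\frac{\omega^{-p(i-1)}}{1-a\omega^{i-1}}=\sum_{\ell=0}^{\infty}\omega^{(\ell-p)(i-1)}a^{\ell}.$$
Interchanging the two sums (legitimate in the region of absolute convergence) and applying the orthogonality relation above, only the values $\ell=\{p\}+jN$ with $j\ge 0$ survive, yielding
$$\sum_{i=1}^N\frac{\omega^{-p(i-1)}}{1-a\omega^{i-1}}=N\sum_{j=0}^{\infty}a^{\{p\}+jN}=\frac{Na^{\{p\}}}{1-a^N}.$$
Both sides are rational functions of $a$, hence this identity on an open set extends to every $a$ with $a^N\ne 1$.

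For \eqref{eq:lemma_sum_2}, I would set $a=\omega^{-k+1}(1+\varepsilon)$ in \eqref{eq:lemma_sum_1} and compare the constant terms of the two sides as $\varepsilon\to 0$. On the left, only the $i=k$ term is singular; using $a\omega^{k-1}=1+\varepsilon$ and $\omega^{-p(k-1)}=\omega^{-\{p\}(k-1)}$, that term equals $-\omega^{-\{p\}(k-1)}/\varepsilon$. On the right, the expansions $1-a^N=1-(1+\varepsilon)^N=-N\varepsilon\bigl(1+\tfrac{N-1}{2}\varepsilon+O(\varepsilon^2)\bigr)$ and $a^{\{p\}}=\omega^{-(k-1)\{p\}}\bigl(1+\{p\}\varepsilon+O(\varepsilon^2)\bigr)$ combine to give
$$\frac{Na^{\{p\}}}{1-a^N}=-\frac{\omega^{-(k-1)\{p\}}}{\varepsilon}\Bigl(1+\bigl(\{p\}-\tfrac{N-1}{2}\bigr)\varepsilon+O(\varepsilon^2)\Bigr).$$
The $1/\varepsilon$ poles on the two sides cancel, and matching the $\varepsilon^0$ coefficients after subtracting the $i=k$ term from both sides produces exactly $\bigl(\tfrac{N-1}{2}-\{p\}\bigr)a^{\{p\}}$ with $a=\omega^{-k+1}$.

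There is no genuine obstacle here; the lemma is essentially a bookkeeping exercise, and the only thing that requires care is tracking the $\omega$-exponents modulo $N$ (so that $\omega^{-p(k-1)}$ is correctly identified with $\omega^{-\{p\}(k-1)}$) and expanding $1-a^N$ to second order in $\varepsilon$ in order to recover the $(N-1)/2$ contribution to the finite part.
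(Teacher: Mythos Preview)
Your argument is correct and follows the paper's approach: the paper likewise expands in a geometric series for $|a|<1$, applies the orthogonality relation, and then extends (case-by-case rather than by your cleaner rational-function remark). Your treatment of \eqref{eq:lemma_sum_2} via an explicit $\varepsilon$-expansion is more detailed than the paper's one-line statement that the case $a=\omega^{-k+1}$ is ``obtained similarly without considering the term with $i=k$,'' but it carries out exactly the limit that remark points to.
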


\begin{proof}
Suppose $|a|<1$. Since $\D\sum_{i=1}^N\omega^{k(i-1)}=N\delta_{k0}$,
we have
\begin{equation}
   \begin{array}{l}
   \D\sum_{i=1}^N\frac{\omega^{-p(i-1)}}{1-a\omega^{i-1}}
   =\sum_{k=0}^\infty\sum_{i=1}^N a^k\omega^{(k-p)(i-1)}
   =N\sum_{k=0\atop k-p\equiv 0\,\hbox{\scriptsize mod}\,N}^\infty a^k\\
   \D=N\sum_{j=0}^\infty a^{Nj+\{p\}}
   =\frac{Na^{\{p\}}}{1-a^N}.
   \end{array}
\end{equation}
Similar conclusion holds for $a$ with $|a|>1$. When $|a|=1$ but
$a^N\ne 1$, (\ref{eq:lemma_sum_1}) is derived by a limit. When
$a=\omega^{-k+1}$, (\ref{eq:lemma_sum_2}) is obtained similarly
without considering the term with $i=k$. The lemma is proved.
\end{proof}

Let $e_a$ $(a=1,\cdots,N)$ be $N$ dimensional vectors with the
entries $(e_a)_k=\omega^{-(a-1)(k-1)}$, then $e_a^*
e_b=N\delta_{ab}$. Write
$H=\left(\begin{array}{c}h_1\\\vdots\\h_N\end{array}\right)$ where
$h_j$ is an $r_j\times 1$ column vector. By (\ref{ABCD1}),
(\ref{ABCD2}) and Lemma~\ref{lemma:sum},
\begin{equation}
   \begin{array}{l}
   \D(Ae_a)_j=\sum_{k=1}^NA_{jk}(e_a)_k\\
   \D=\sum_{k,c=1}^N\frac{\omega^{(1-m)(j-1)}\omega^{-(k-j)(c-1)}\omega^{-(a-1)(k-1)}
   h_{m-c}^TK_{m-c,c}h_c}{(1+\omega^{k-j})\mu}\\
   \D=\sum_{c=1}^N\frac{N}{2\mu}(-1)^{\{a+c-2\}}h_{m-c}^TK_{m-c,c}h_{c}(e_{a+m-1})_j
   \end{array}
\end{equation}
when $N$ is odd, and
\begin{equation}
   \begin{array}{l}
   \D(Ae_a)_j=
   \sum_{k,c=1\atop k\ne j+\varepsilon n}^N\frac{\omega^{(1-m)(j-1)}
   \omega^{-(k-j)(c-1)}\omega^{-(a-1)(k-1)}
   h_{m-c}^TK_{m-c,c}h_c}{(1+\omega^{k-j})\mu}\\
   \D\qquad+\sum_{c=1}^N\omega^{(1-m)(j-1)}(-1)^{c-1}\omega^{-(a-1)(j+\varepsilon n-1)}
   h_{m-c}^{\prime\,T}K_{m-c,c}h_c\\
   \D=\sum_{c=1}^N(-1)^{a+c-1}\Big(\{a+c-2\}\mu^{-1}
   h_{m-c}^TK_{m-c,c}h_c
   -\sigma(m)h^{\prime\,T}_{m-c}K_{m-c,c}h_c\Big)\\
   \quad\cdot(e_{a+m-1})_j
   \end{array}
\end{equation}
when $N$ is even with
$\D\sigma(m)=\left\{\begin{array}{ll}1\quad&\hbox{if $m$ is
odd}\\0&\hbox{if $m$ is even}\end{array}\right.$. Here we use the
facts $(-1)^{\{k\}}=(-1)^k$ and $H^TK\Omega^nH=0$ for even $N$, and
$H^{\prime\,T}K\Omega^nH=0$ for even $N$ and even $m$.

Likewise, we can get all the expressions of $Ae_a$, $Be_a$, $Ce_a$
and $De_a$ for both odd and even $N$. They are
\begin{equation}
   \begin{array}{l}
   \D Ae_a=N\mu^{-1}\alpha_ae_{a+m-1},\quad
   Be_a=N\mu^{-1}\beta_ae_{a+1},\\
   Ce_a=N\bar\mu\gamma_ae_{a+1},\quad
   De_a=N\bar\mu\delta_ae_{a-m+3}
   \end{array}
\end{equation}
where
\begin{equation}
   \begin{array}{l}
   \D\alpha_a=\frac{1}{2}\sum_{c=1}^N(-1)^{\{a+c-2\}}h_{m-c}^TK_{m-c,c}h_{c},\\
   \D \beta_a=|\mu|^2\sum_{c=1}^N
   \frac{(-|\mu|^2)^{\{c-a-1\}}}{1+|\mu|^{2N}}h_c^* V_{cc}^{-1}h_c,\\
   \D \gamma_a=\sum_{c=1}^N
   \frac{(-|\mu|^2)^{\{a+c-2\}}}{1+|\mu|^{2N}}h_c^* V_{cc}^{-1}h_c,\\
   \D\delta_a=\frac 1{2}\sum_{c=1}^N(-1)^{\{a-c\}}h_{m-c}^* K_{m-c,c}\bar h_c
   \end{array}\label{eq:alphabetagammadeltaodd}
\end{equation}
for odd $N$, and
\begin{equation}
   \begin{array}{l}
   \D\alpha_a=\frac 1{N}\sum_{c=1}^N(-1)^{a+c-1}\Big(\{a+c-2\}
   h_{m-c}^TK_{m-c,c}h_c-\sigma(m)\mu h^{\prime\,T}_{m-c}K_{m-c,c}h_c\Big)\\
   \D \beta_a=|\mu|^2\sum_{c=1}^N
   \frac{(-|\mu|^2)^{\{c-a-1\}}}{1-|\mu|^{2N}}h_c^* V_{cc}^{-1}h_c,\\
   \D \gamma_a=\sum_{c=1}^N
   \frac{(-|\mu|^2)^{\{a+c-2\}}}{1-|\mu|^{2N}}h_c^* V_{cc}^{-1}h_c,\\
   \D\delta_a=\frac 1{N}\sum_{c=1}^N(-1)^{a+c-1}\Big(\{a-c\}
   h_{m-c}^* K_{m-c,c}\bar h_c+\sigma(m)\bar\mu h^{\prime\,*}_{m-c}K_{m-c,c}\bar
   h_c\Big)
   \end{array}\label{eq:alphabetagammadeltaeven}
\end{equation}
for even $N$.

Since $\Gamma^{-1}=\left(\begin{array}{cc}\check A&\check B\\\check
C&\check D\end{array}\right)$ where
\begin{equation}
   \begin{array}{l}
   \check A=(A-BD^{-1}C)^{-1},\quad
   \check B=-A^{-1}B(D-CA^{-1}B)^{-1},\\
   \check C=-D^{-1}C(A-BD^{-1}C)^{-1},\quad
   \check D=(D-CA^{-1}B)^{-1},
   \end{array}\label{eq:GammaInv}
\end{equation}
we have
\begin{equation}
   \begin{array}{l}
   \D \check Ae_a=\frac\mu N\frac{\delta_{a-1}}{\Delta_{a+1}}e_{a-m+1},\quad
   \check Be_a=-\frac 1{N\bar\mu}\frac{\beta_{a+m-3}}{\Delta_{a+m-1}}e_{a-1},\\
   \D \check Ce_a=-\frac\mu N\frac{\gamma_{a-m+1}}{\Delta_{a+1}}e_{a-1},\quad
   \check De_a=\frac 1{N\bar\mu}\frac{\alpha_{a-1}}{\Delta_{a+m-1}}e_{a+m-3}
   \end{array}\label{eq:OddGenGammaInvExpr}
\end{equation}
where
\begin{equation}
   \Delta_a=\alpha_{a-m}\delta_{a-2}-\beta_{a-2}\gamma_{a-m}.\label{eq:Delta1}
\end{equation}
That is,
\begin{equation}
   \begin{array}{l}
   \D\sum_{b=1}^N\check A_{ab}\omega^{-(j-1)(b-1)}
   =\frac\mu
   N\frac{\delta_{j-1}}{\Delta_{j+1}}\omega^{-(j-m)(a-1)},\\
   \D\sum_{b=1}^N\check B_{ab}\omega^{-(j-1)(b-1)}
   =-\frac 1{N\bar\mu}\frac{\beta_{j+m-3}}{\Delta_{j+m-1}}\omega^{-(j-2)(a-1)},\\
   \D\sum_{b=1}^N\check C_{ab}\omega^{-(j-1)(b-1)}
   =-\frac\mu
   N\frac{\gamma_{j-m+1}}{\Delta_{j+1}}\omega^{-(j-2)(a-1)},\\
   \D\sum_{b=1}^N\check D_{ab}\omega^{-(j-1)(b-1)}
   =\frac 1{N\bar\mu}\frac{\alpha_{j-1}}{\Delta_{j+m-1}}\omega^{-(j+m-4)(a-1)}.
   \end{array}\label{eq:checkAexpr}
\end{equation}

(\ref{eq:alphabetagammadeltaodd}),
(\ref{eq:alphabetagammadeltaeven}) and the identity
$\{-p-1\}+\{p\}=N-1$ lead to the following relations.

\begin{lemma}
The coefficients $\alpha_k,\beta_k,\gamma_k,\delta_k$ satisfy
\begin{equation}
   \begin{array}{l}
   \alpha_{3-m-k}=\alpha_k,\quad
   \bar\delta_k=\alpha_{k-m+2},\\
   \bar\beta_k=\beta_k,\quad\bar\gamma_k=\gamma_k,\quad
   \beta_k=|\mu|^2\gamma_{1-k}.
   \end{array}\label{eq:alphabetagammadeltarel}
\end{equation}
Moreover,
\begin{equation}
   \Delta_k=|\alpha_{k-m}|^2-\beta_{k-2}\gamma_{k-m} \label{eq:Delta}
\end{equation}
is real and satisfies
\begin{equation}
   \Delta_{3+m-k}=\Delta_k.
   \label{eq:Deltarel}
\end{equation}
\end{lemma}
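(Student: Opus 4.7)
The plan is to verify each identity in (\ref{eq:alphabetagammadeltarel}), (\ref{eq:Delta}) and (\ref{eq:Deltarel}) by direct substitution into the explicit formulas (\ref{eq:alphabetagammadeltaodd}) and (\ref{eq:alphabetagammadeltaeven}), combined with the index substitution $c\mapsto m-c$, the symmetry $K^T=K$ (which gives $K_{m-c,c}^T=K_{c,m-c}$), the reality of the diagonal blocks $V_{cc}$, and the arithmetic identity $\{-p-1\}+\{p\}=N-1$.

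First I would treat $\alpha_{3-m-k}=\alpha_k$. In the sum defining $\alpha_k$, each summand $h_{m-c}^T K_{m-c,c} h_c$ is a scalar, hence equals its own transpose $h_c^T K_{c,m-c} h_{m-c}$. Relabeling $c\mapsto m-c$ sends $\{k+c-2\}$ to $\{k-m+(m-c)-2+\,\text{something}\}$; more precisely, the exponent in $\alpha_{3-m-k}$ becomes $\{3-m-k+c-2\}=\{1-k-c\}$, and the identity $\{1-k-c\}+\{k+c-2\}=N-1$ with $N$ odd makes the two signs agree. For $N$ even, the same relabeling is applied to both terms of (\ref{eq:alphabetagammadeltaeven}); the $h'$-term survives only when $m$ is odd, and in that case $H^{\prime T}K\Omega^{N/2}H=-H^TK\Omega^{N/2}H'$ from (\ref{eq:Hconstr}) (or is trivially an identity as noted in the paper) provides exactly the extra sign needed. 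The identity $\bar\delta_k=\alpha_{k-m+2}$ then follows by comparing the formulas for $\delta_k$ and $\alpha$ directly: complex conjugation of $\delta_k$'s summand $h_{m-c}^* K_{m-c,c}\bar h_c$ yields $h_{m-c}^T K_{m-c,c} h_c$, and the exponent $\{a-c\}\mapsto\{(k-m+2)+c-2\}=\{k-m+c\}$ matches after the substitution $c\to m-c$.

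Next I would handle the reality statements $\bar\beta_k=\beta_k$ and $\bar\gamma_k=\gamma_k$: the summand $h_c^* V_{cc}^{-1} h_c$ is real because $V_{cc}$ is real diagonal, and all coefficients $(-|\mu|^2)^{\{\cdot\}}/(1\pm|\mu|^{2N})$ are manifestly real. For $\beta_k=|\mu|^2\gamma_{1-k}$, substituting $a=1-k$ into $\gamma_a$ gives exponent $\{(1-k)+c-2\}=\{c-k-1\}$, which coincides with the exponent in $\beta_k$; the factor $|\mu|^2$ is already present in the formula for $\beta_k$, so the identity is immediate.

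For $\Delta_k$, its reality follows at once from (\ref{eq:Delta1}): the first term is $|\alpha_{k-m}|^2$ because $\bar\delta_{k-2}=\alpha_{k-m}$, and the second term $\beta_{k-2}\gamma_{k-m}$ is a product of two reals. Finally, for $\Delta_{3+m-k}=\Delta_k$ I would replace $k$ by $k-m$ in $\alpha_{3-m-k}=\alpha_k$ to obtain $\alpha_{3-k}=\alpha_{k-m}$, so $|\alpha_{(3+m-k)-m}|^2=|\alpha_{k-m}|^2$; and using $\beta_j=|\mu|^2\gamma_{1-j}$ twice gives $\beta_{1+m-k}=|\mu|^2\gamma_{k-m}$ and $\beta_{k-2}=|\mu|^2\gamma_{3-k}$, whence $\beta_{(3+m-k)-2}\gamma_{(3+m-k)-m}=|\mu|^2\gamma_{k-m}\gamma_{3-k}=\beta_{k-2}\gamma_{k-m}$.

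The only non-routine point is the $H'$-term appearing in the even-$N$ formulas: I expect the main obstacle to be checking that the derivative contributions to $\alpha_k$ and $\delta_k$ transform correctly under $c\mapsto m-c$, which requires the constraint $H^TK\Omega^{N/2}H=0$ and its derivative consequence (\ref{eq:Hconstr}) to supply a sign that matches the parity change from $\{-p-1\}+\{p\}=N-1$ when $N-1$ is odd. Once this bookkeeping is in place, the rest is a chain of essentially mechanical index manipulations.
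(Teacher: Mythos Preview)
Your proposal is correct and follows essentially the same approach as the paper, which simply states that the relations follow from the explicit formulas (\ref{eq:alphabetagammadeltaodd}), (\ref{eq:alphabetagammadeltaeven}) together with the identity $\{-p-1\}+\{p\}=N-1$; you have spelled out in more detail the index substitution $c\mapsto m-c$, the use of $K^T=K$, and the role of the constraint $H^TK\Omega^{N/2}H=0$ (with its derivative version) in the even-$N$ case. One small clarification: in the even-$N$ case the constraint $H^TK\Omega^{N/2}H=0$ is needed not only for the $h'$-terms but also for the non-derivative terms of $\alpha_k$, since the coefficient $\{k+c-2\}$ transforms to $N-1-\{k+c-2\}$ under the relabeling and the resulting discrepancy is proportional to $\sum_c(-1)^{c-1}h_{m-c}^TK_{m-c,c}h_c=H^TK\Omega^{N/2}H$.
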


From (\ref{eq:checkAexpr}), we have
\begin{equation}
   \begin{array}{l}
   \D\sum_{a,b=1}^N(\mu_b^{-1} H_a\check A_{ab}H_b^TK)_{jk}
   =\Delta_{2+m-j}^{-1}\bar\alpha_{2-j} h_jh_{m-j}^TK_{m-j,j}\delta_{jk},\\
   \D\sum_{a,b=1}^N(\mu_{N+b}^{-1} H_a\check B_{ab}H_{N+b}^TK)_{jk}
   =-\Delta_{2+m-j}^{-1}\beta_{m-j}h_jh_j^* V_{jj}^{-1}\delta_{jk},\\
   \D\sum_{a,b=1}^N(\mu_b^{-1} H_{N+a}\check C_{ab}H_b^TK)_{jk}\\
   \qquad=-\Delta_{2+m-j}^{-1}\gamma_{2-j}K_{j,m-j}V_{m-j,m-j}^{-1}
   \bar h_{m-j}h_{m-j}^TK_{m-j,j}\delta_{jk},\\
   \D\sum_{a,b=1}^N(\mu_{N+b}^{-1} H_{N+a}\check
   D_{ab}H_{N+b}^TK)_{jk}\\
   \qquad=\Delta_{2+m-j}^{-1}\alpha_{2-j}K_{j,m-j}
   V_{m-j,m-j}^{-1}\bar h_{m-j}h_j^* V_{jj}^{-1}\delta_{jk}
   \end{array}\label{eq:beforeF}
\end{equation}
for $a,b=1,\cdots,N$.

Denote $\D T_{2N}=(\mu/\bar\mu)^NF$ where $\D
F=I-\sum_{a,b=1}^{2N}\mu_b^{-1} H_a\check\Gamma_{ab}H_k^TK$.
(\ref{eq:beforeF}) leads to $F=(f_j\delta_{jk})_{1\le j,k\le N}$
where
\begin{equation}
   \begin{array}{l}
   \D f_j\!=\!I_{r_j}-\Delta_{2+m-j}^{-1}
   \Big(\bar\alpha_{2-j}h_jh_{m-j}^TK_{m-j,j}
   +\alpha_{2-j}K_{j,m-j}V_{m-j,m-j}^{-1}\bar h_{m-j}h_j^* V_{jj}^{-1}\\
   \D\qquad-\beta_{m-j}h_jh_j^* V_{jj}^{-1}
   -\gamma_{2-j}K_{j,m-j}V_{m-j,m-j}^{-1}
   \bar h_{m-j}h_{m-j}^TK_{m-j,j}\Big)
   \end{array}\label{eq:f}
\end{equation}
is an $r_j\times r_j$ matrix.

Computing $f_j$'s explicitly, we get the expressions of new
solutions of the two dimensional elliptic affine Toda equation
(\ref{eq:intc}) from known ones by Darboux transformation.

\begin{theorem}\label{thm}
Let $V(z,\zbar)$ be a real diagonal matrix solution of the two
dimensional elliptic affine Toda equation (\ref{eq:intc}),
satisfying $V^TKV=K$ and $\det V_{jj}=1$ for $j$ with $2j\equiv
m\mod N$. Let $\D
H=\left(\begin{array}{cc}h_1\\\vdots\\h_N\end{array}\right)$ be a
column solution of the Lax pair~(\ref{eq:LP}) with $\lambda=\mu$
$(\mu\ne 0,|\mu|\ne 1)$ where $h_j$ is an $r_j\times 1$ column
vector. When both $N$ and $m$ are even, $H$ should also satisfy the
constraint $H^TK\Omega^{N/2}H=0$. Let $\alpha_j$, $\beta_j$,
$\gamma_j$ be given by (\ref{eq:alphabetagammadeltaodd}) and
(\ref{eq:alphabetagammadeltaeven}) for odd and even $N$
respectively, and $\Delta_j$ be given by (\ref{eq:Delta}). When
$r_j=2$, denote $\D
V_{jj}=\left(\begin{array}{cc}v_j\\&v_j^{-1}\end{array}\right)$, $\D
h_j=\left(\begin{array}{c}h_{j1}\\h_{j2}\end{array}\right)$. Then
\begin{equation}
   f_j=\frac{\Delta_{1+m-j}}{\Delta_{2+m-j}}
\end{equation}
if $r_j=1$, and
\begin{equation}
   f_j=\left(\begin{array}{cc}
   \D\frac{|h_{j1}|^2v_j^{-1}-\gamma_{2-j}}{|h_{j2}|^2v_j-\gamma_{2-j}}&0\\
   0&\D\frac{|h_{j2}|^2v_j-\gamma_{2-j}}{|h_{j1}|^2v_j^{-1}-\gamma_{2-j}}
   \end{array}\right)
\end{equation}
if $r_j=2$. Moreover, $\det f_j=1$ for $j$ with $2j\equiv m\mod N$
no matter $r_j=1$ or $2$. Therefore, $\widetilde
V=(V_{jj}f_j\delta_{jk})_{1\le j,k\le N}$ is a new real diagonal
solution of (\ref{eq:intc}) provided that $f_j$ is positive
definite, and satisfies $\widetilde V^TK\widetilde V=K$ and
$\det\widetilde V_{jj}=1$ for $j$ with $2j\equiv m\mod N$.
\end{theorem}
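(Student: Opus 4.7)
The plan is to start from the explicit formula (\ref{eq:f}) for $f_j$ and specialize it case by case, using the algebraic identities collected in the preceding lemma about $\alpha_a,\beta_a,\gamma_a,\delta_a$. By Lemma~\ref{lemma:DTPQV} and Theorem~\ref{thm:DT_sys}, once we verify that each $f_j$ is a diagonal matrix (with diagonal entries in the given form) and that $\det f_j=1$ when $2j\equiv m\mod N$, the matrix $\widetilde V=(V_{jj}f_j\delta_{jk})$ will automatically be a real diagonal solution satisfying $\widetilde V^TK\widetilde V=K$, $\Omega\widetilde V\Omega^{-1}=\widetilde V$, provided the positivity of $f_j$ ensures $\widetilde V$ is real rather than just real-diagonal up to sign.

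First I would handle the case $r_j=1$. Here $K_{j,m-j}=1$, $h_j$ and $h_{m-j}$ are scalars, and (\ref{eq:f}) reduces to a scalar expression. Using $\bar\delta_k=\alpha_{k-m+2}$ from (\ref{eq:alphabetagammadeltarel}), the four terms in the brackets can be rewritten as $\bar\alpha_{2-j}h_j h_{m-j}+\alpha_{2-j}v_{m-j}^{-1}\bar h_{m-j}\bar h_j v_j^{-1}-\beta_{m-j}|h_j|^2 v_j^{-1}-\gamma_{2-j}|h_{m-j}|^2 v_{m-j}^{-1}$. The key identity to prove is that this bracket equals $\Delta_{2+m-j}-\Delta_{1+m-j}$, so that $f_j=\Delta_{1+m-j}/\Delta_{2+m-j}$. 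This follows from the recursion satisfied by $\alpha_a,\beta_a,\gamma_a$ under $a\mapsto a+1$: the difference $\alpha_{a+1}-\alpha_a$ (and likewise for $\beta$, $\gamma$) produces exactly one surviving term because of the $(-1)^{\{a+c-2\}}$ factor, and substituting these single-term differences into $\Delta_k=|\alpha_{k-m}|^2-\beta_{k-2}\gamma_{k-m}$ and subtracting yields the desired bracket.

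The case $r_j=2$ is the main obstacle. Here $2j\equiv m\mod N$ forces $m-j\equiv j\mod N$, so $h_{m-j}=h_j$, $K_{j,m-j}=K_{jj}=\bigl(\begin{smallmatrix}0&1\\1&0\end{smallmatrix}\bigr)$, and the four rank-one matrices in (\ref{eq:f}) are explicit $2\times 2$ matrices in $h_{j1},h_{j2},v_j$. The direct expansion produces both diagonal and off-diagonal entries. For the off-diagonal entries to cancel, I would use the constraint $H^TK\Omega^{N/2}H=0$ (in the even-$N$, even-$m$ case) together with the identity $\beta_k=|\mu|^2\gamma_{1-k}$ and the $\alpha_{3-m-k}=\alpha_k$ symmetry applied at $k=2-j$ (which becomes an involution because $2j\equiv m$). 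These combine to force the $(1,2)$ and $(2,1)$ entries of the bracket to vanish, leaving only diagonal entries. Reading off those diagonal entries and rewriting $\Delta_{2+m-j}$ in terms of $|h_{j1}|^2v_j^{-1}-\gamma_{2-j}$ and $|h_{j2}|^2v_j-\gamma_{2-j}$ (using the symmetry $\Delta_{3+m-k}=\Delta_k$ which makes $\Delta_{2+m-j}$ self-symmetric at these indices) produces the stated expression for $f_j$.

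Finally I would verify $\det f_j=1$ when $2j\equiv m\mod N$. For $r_j=2$ this is immediate from the explicit reciprocal form of the diagonal. For $r_j=1$ with $2j\equiv m$, we have $1+m-j\equiv 1+j\mod N$ so $\Delta_{1+m-j}=\Delta_{1+j}$, and applying $\Delta_{3+m-k}=\Delta_k$ at $k=1+j$ gives $\Delta_{1+j}=\Delta_{2+m-j}$, whence $f_j=1$. With diagonality of $f_j$ established in every case where $r_j>1$, Theorem~\ref{thm:DT_sys} delivers $\widetilde V$ with all required symmetries, and by Lemma~\ref{lemma:DTPQV} $\widetilde V$ is a new solution of the Toda equation (\ref{eq:intc}).
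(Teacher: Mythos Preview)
Your overall architecture matches the paper's proof, but there are two concrete gaps.

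For $r_j=1$: the single-term telescoping you describe comes from \emph{sums}, not differences. In the paper the relevant identities are
\[
   \alpha_{2-j}+\alpha_{1-j}=h_{m-j}h_j,\qquad
   \gamma_{2-j}+|\mu|^2\gamma_{1-j}=V_{jj}^{-1}|h_j|^2,\qquad
   \beta_{m-j-1}+|\mu|^2\beta_{m-j}=|\mu|^2 V_{m-j,m-j}^{-1}|h_{m-j}|^2,
\]
and when you feed these into $\Delta_{2+m-j}-\Delta_{1+m-j}$ you do \emph{not} get the bracket directly: there is a leftover term $\sigma_j\sigma_{m-j}-|\tau_j|^2$ (with $\sigma_j=V_{jj}^{-1}|h_j|^2$, $\tau_j=h_{m-j}h_j$) which vanishes only because $V_{jj}V_{m-j,m-j}=1$. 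You will need to invoke that, and it is the reason the hypothesis $V^TKV=K$ is actually used in the $r_j=1$ computation.

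For $r_j=2$: the symmetries you list ($\alpha_{3-m-k}=\alpha_k$ and $\beta_k=|\mu|^2\gamma_{1-k}$) are not what makes the off-diagonal entries of (\ref{eq:f}) vanish. The paper's argument instead \emph{evaluates} the relevant coefficients explicitly at the special index $2j\equiv m$: using the constraint $H^TK\Omega^{N/2}H=0$ (for even $N$) or directly (for odd $N$) one finds
\[
   \alpha_{2-j}=h_{j1}h_{j2},\qquad
   \beta_j+\gamma_{2-j}=|h_{j1}|^2v_j^{-1}+|h_{j2}|^2v_j.
\]
Substituting these into the $2\times 2$ expression for $f_j$ makes the off-diagonal entries cancel algebraically and simultaneously gives the factorisation
\[
   \Delta_{2+m-j}=(|h_{j1}|^2v_j^{-1}-\gamma_{2-j})(|h_{j2}|^2v_j-\gamma_{2-j}),
\]
from which the stated diagonal form of $f_j$ follows. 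Without these explicit evaluations the general symmetry relations alone do not force diagonality, so this is the step you need to add. Your treatment of $\det f_j=1$ and the concluding appeal to Lemma~\ref{lemma:DTPQV} and Theorem~\ref{thm:DT_sys} are fine.
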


\begin{proof}
First suppose $r_j=2$. As demanded at the beginning of
Section~\ref{sect:LP}, $r_j=2$ holds only when $2j\equiv m\mod N$.

When $N=2n$ is even, $2j\equiv m\mod N$ happens only when $m$ is
even, and $j=m/2$ or $j=n+m/2$ in this case. For this $j$, the
constraint $H^TK\Omega^nH=0$ implies
\begin{equation}
   h_j^TK_{jj}h_j+(-1)^{n}h_{n+j}^TK_{n+j,n+j}h_{n+j}
   +2\sum_{k=1}^{n-1}(-1)^kh_{j-k}^TK_{j-k,j+k}h_{j+k}=0.
\end{equation}
With this constraint,
\begin{equation}
   \begin{array}{l}
   \D\alpha_{2-j}=\frac 1{2n}\sum_{c=1}^{2n}(-1)^{j+c+1}\{c-j\}
   h_{2j-c}^TK_{2j-c,c}h_{c}\\
   \D=\frac 1{2}(-1)^{n+1}h_{n+j}^TK_{n+j,n+j}h_{n+j}
   +\sum_{k=1}^{n-1}(-1)^{k+1}h_{j-k}^TK_{j-k,j+k}h_{j+k}\\
   \D=\frac 12h_j^TK_{jj}h_j=h_{j1}h_{j2},
   \end{array}
\end{equation}
\begin{equation}
   \begin{array}{l}
   \D\beta_j+\gamma_{2-j}=|\mu|^2\sum_{c=1}^{2n}\frac{(-|\mu|^2)^{\{c-j-1\}}}{1-|\mu|^{4n}}
   h_c^*V_{cc}^{-1}h_c+\sum_{c=1}^{2n}\frac{(-|\mu|^2)^{\{c-j\}}}{1-|\mu|^{4n}}
   h_c^* V_{cc}^{-1}h_c\\
   \D=h_j^* V_{jj}^{-1} h_j=|h_{j1}|^2v_j^{-1}+|h_{j2}|^2v_j.
   \end{array}\label{eq:betagamma1}
\end{equation}
Hence if $r_j=2$,
\begin{equation}
   \begin{array}{l}
   \D f_j=I_2-\Delta_{2+j}^{-1}\Big(\bar\alpha_{2-j}h_jh_j^TK_{jj}
   +\alpha_{2-j}K_{jj}V_{jj}^{-1}\bar h_jh_j^* V_{jj}^{-1}\\
   \qquad -\beta_jh_jh_j^* V_{jj}^{-1}
   -\gamma_{2-j}K_{jj}V_{jj}^{-1}\bar h_jh_j^TK_{jj}\Big)\\[6pt]
   \D=
   \left(\begin{array}{cc}
   \D\frac{|h_{j1}|^2v_j^{-1}-\gamma_{2-j}}{|h_{j2}|^2v_j-\gamma_{2-j}}&0\\
   0&\D\frac{|h_{j2}|^2v_j-\gamma_{2-j}}{|h_{j1}|^2v_j^{-1}-\gamma_{2-j}}
   \end{array}\right)
   \end{array}\label{eq:explicit_f1}
\end{equation}
is a diagonal matrix with $\det f_j=1$, since
\begin{equation}
   \Delta_{2+j}=(|h_{j1}|^2v_j^{-1}-\gamma_{2-j})(|h_{j2}|^2v_j-\gamma_{2-j}).
\end{equation}

When $N$ is odd, there is no constraint on $H$. From
(\ref{eq:alphabetagammadeltaodd}), the equalities
$\D\alpha_{2-j}=h_{j1}h_{j2}$ and
$\D\beta_j+\gamma_{2-j}=|h_{j1}|^2v_j^{-1}+|h_{j2}|^2v_j$ still hold
for $j$ with $2j\equiv m\mod N$. Hence (\ref{eq:explicit_f1}) is
still true when $N$ is odd and $r_j=2$.

Now we consider the case $r_j=1$. Denote
$\sigma_j=V_{jj}^{-1}|h_j|^2$, $\tau_j=h_{m-j}h_j$, then
\begin{equation}
   f_j=1-\Delta_{2+m-j}^{-1}(\bar\alpha_{2-j}\tau_j+\alpha_{2-j}\bar\tau_j
   -\beta_{m-j}\sigma_j-r_{2-j}\sigma_{m-j})
\end{equation}
since $V_{m-j,m-j}V_{jj}=1$. By direct calculation, we have
\begin{equation}
   \begin{array}{l}
   \beta_{m-j-1}+|\mu|^2\beta_{m-j}=|\mu|^2\sigma_{m-j},\\
   \gamma_{2-j}+|\mu|^2\gamma_{1-j}=\sigma_j,\\
   \alpha_{2-j}+\alpha_{1-j}=\tau_j.\label{eq:alphaetcneigh}
   \end{array}
\end{equation}
Here the relation $H^TK\Omega^nH=0$ is used for even $N$.
(\ref{eq:alphaetcneigh}) leads to
\begin{equation}
   \begin{array}{l}
   \Delta_{2+m-j}-\Delta_{1+m-j}=|\alpha_{2-j}|^2-|\alpha_{1-j}|^2
   -\beta_{m-j}\gamma_{2-j}+\beta_{m-j-1}\gamma_{1-j}\\
   =\bar\alpha_{2-j}\tau_j+\alpha_{2-j}\bar\tau_j
   -\beta_{m-j}\sigma_j-r_{2-j}\sigma_{m-j}+\sigma_j\sigma_{m-j}-|\tau_j|^2\\
   =\bar\alpha_{2-j}\tau_j+\alpha_{2-j}\bar\tau_j
   -\beta_{m-j}\sigma_j-r_{2-j}\sigma_{m-j}
   \end{array}
\end{equation}
by using $V_{m-j,m-j}V_{jj}=1$. Hence
\begin{equation}
   f_j=\frac{\Delta_{1+m-j}}{\Delta_{2+m-j}}.
\end{equation}
If $j$ satisfies $2j\equiv m\mod N$, then (\ref{eq:Deltarel})
implies $f_j=1$. The theorem is proved.
\end{proof}

According to different choices of $N$, $K$ and $r_1,\cdots,r_N$, the
local solutions of various two dimensional elliptic affine Toda
equations can be obtained from Theorem~\ref{thm}, which will be
shown in the next section.

Although in Theorem~\ref{thm}, only Darboux transformation with one
spectral parameter $\mu$ is presented, Darboux transformation of
higher degree can be obtained by the composition of this kind of
Darboux transformations.

\section{Darboux transformations for two dimensional elliptic
$A_l^{(1)}$, $A_{2l}^{(2)}$, $A_{2l-1}^{(2)}$, $B_l^{(1)}$,
$C_l^{(1)}$, $D_l^{(1)}$ and $D_{l+1}^{(2)}$ Toda
equations}\label{sect:Egs}

In this section we present the explicit expressions of new solutions
given by Darboux transformations for various two dimensional
elliptic affine Toda equations.

The two dimensional elliptic $A_l^{(1)}$ Toda equation is not
contained in the system discussed in Section~\ref{sect:LP} and
\ref{sect:DT}, since it does not possess the complex orthogonal
symmetry. However, a simpler Darboux transformation can be
constructed due to the lack of this symmetry.

For the rest six series of equations the results presented below are
the consequences of Theorem~\ref{thm} directly.

\subsection{Two dimensional elliptic $A_{l}^{(1)}$ Toda equation $(l\ge 1)$}\label{sect:A1}\

In this case, $M=N=l+1$, $r_1=\cdots=r_{l+1}=1$;
$\D\omega=\exp\Big(\frac{2\pi\I}{l+1}\Big)$;
$\Omega=(\Omega_{jk})_{1\le j,k\le l+1}$ with
$\Omega_{jk}=\omega^{-j+1}\delta_{jk}$; $J=(J_{jk})_{1\le j,k\le
l+1}$ with $J_{jk}=\delta_{j+1,k}$; $V=(V_{jk})_{1\le j,k\le l+1}$
with $V_{jk}=\E{u_j}\delta_{jk}$ and $\D\sum_{j=1}^{l+1} u_j=0$.
There is no matrix $K$ defining complex orthogonal symmetry.

The equation is
\begin{equation}
   \triangle u_j=\E{u_j-u_{j-1}}-\E{u_{j+1}-u_j}\quad
   (j=1,\cdots,l+1) \label{eq:eqA1}
\end{equation}
with $u_0=u_{l+1}$, $u_{l+2}=u_1$. By setting
$w_j=-(u_1+\cdots+u_j)$ $(j=1,\cdots,l)$, it becomes
\begin{equation}
   \begin{array}{l}
   \triangle w_1=\E{2w_1-w_2}-\E{-w_1-w_l},\\
   \triangle
   w_j=\E{-w_{j-1}+2w_j-w_{j+1}}-\E{-w_1-w_l}\quad(j=2,\cdots,l-1),\\
   \triangle w_l=\E{-w_{l-1}+2w_l}-\E{-w_1-w_l}.
   \end{array}
\end{equation}

Similar to (ii) and (iii) of Lemma~\ref{lemma:sys}, we know that if
$\Phi$ is a solution of the Lax pair with $\lambda=\lambda_0$,
$V^{-1}\bar\Phi$ is a solution of the adjoint Lax pair with
$\lambda=-\bar\lambda_0^{-1}$. (This symmetry can also be changed to
a unitary symmetry by a gauge transformation
\cite{bib:McIntosh,bib:Terng}.) Owing to this fact, the Darboux
transformation can be constructed as follows.

Let $\mu$ be a non-zero complex number with $|\mu|\ne 1$. Let $H$ be
a column solution of the Lax pair (\ref{eq:LP}). Let
$\mu_j=\omega^{j-1}\mu$, $H_j=\Omega^{j-1}H$ $(j=1,\cdots,N)$ with
$N=l+1$. Let $\D\Gamma_{jk}=\frac{H_j^* V^{-1}
H_k}{1+\bar\mu_j\mu_k}$, then $\Gamma_{j+1,k+1}=\Gamma_{jk}$,
$\Gamma_{jk}^*=\Gamma_{kj}$, $\Gamma_{jk,z}=\bar\mu_j^{-1} H_j^*
V^{-1} JH_k$, $\Gamma_{jk,\zbar}=\mu_k^{-1} H_j^*J^TV^{-1}H_k$.

Let
\begin{equation}
   T(\lambda)=\prod_{i=1}^N(\lambda+\bar\mu_i^{-1})\Big(I-\sum_{j,k=1}^N
   \frac{H_j\check\Gamma_{jk}H_k^* V^{-1}}{\bar\mu_k\lambda+1}\Big),
\end{equation}
where $\check\Gamma=\Gamma^{-1}$, then
\begin{equation}
   T(\lambda)^{-1}=\prod_{i=1}^N(\lambda+\bar\mu_i^{-1})^{-1}\Big(I+\sum_{j,k=1}^N
   \frac{H_j\check\Gamma_{jk}H_k^* V^{-1}}{\bar\mu_k(\lambda-\mu_j)}\Big).
\end{equation}

Similar to Lemma~\ref{lemma:DTPQV}, we know that $\widetilde
V=\rho^{-1} T_NV$ satisfies (\ref{eq:intc}) for any non-zero complex
number $\rho$. What is more, similar to Theorem~\ref{thm:DT_sys},
\begin{equation}
   \begin{array}{l}
   \D\Omega T(\lambda)\Omega^{-1}=T(\omega\lambda),\\
   \D\lambda^{N}VT(-\bar\lambda^{-1})^* V^{-1} T_N^{-1}
   =\prod_{i=1}^N(\mu_i^{-1}\lambda-1)(\bar\mu_i\lambda+1)T(\lambda)^{-1}
   \end{array}
\end{equation}
hold by simpler computation which is omitted here. Especially, $T_N$
satisfies $\Omega T_N\Omega^{-1}=T_N$, $T_N^*=(\bar\mu/\mu)^NV^{-1}
T_NV=(\bar\mu/\mu)^NT_N$ since all $r_j=1$. Hence
$(-1)^N\mu^{-N}T_N$ is a real diagonal matrix.

Written in terms of the basis $\{e_\alpha\}$, we have
\begin{equation}
   \Gamma e_a=N\alpha_ae_a,\quad \Gamma^{-1} e_a=\frac 1N\alpha_a^{-1} e_a
\end{equation}
where
\begin{equation}
   \alpha_a=\sum_{c=1}^N\frac{(-|\mu|^2)^{\{a+c-2\}}}{1-(-|\mu|^2)^N}|h_c|^2V_{cc}^{-1}.
   \label{eq:A1alpha}
\end{equation}

Then $T_N=((-1)^N\bar\mu^{-N}|\mu|^2f_j\delta_{jk})_{1\le j,k\le N}$
where
\begin{equation}
   f_j=|\mu|^{-2}(\alpha_{2-j}^{-1}|h_j|^2V_{jj}^{-1}-1)=\frac{\alpha_{1-j}}{\alpha_{2-j}}
\end{equation}
by using (\ref{eq:A1alpha}). Hence we get the following explicit
result for the new solution $\widetilde
V=(f_jV_{jj}\delta_{jk})_{1\le j,k\le l+1}$.

Let $(u_1,\cdots,u_{l+1})$ be a solution of (\ref{eq:eqA1}),
$(h_1,\cdots,h_{l+1})^T$ be a column solution of the corresponding
Lax pair where $h_1,\cdots,h_{l+1}$ are scalar functions. Then the
new solution of (\ref{eq:eqA1}) is $\D\widetilde
u_j=u_j+\ln\frac{g_{j+1}}{g_j}$ $(j=1,\cdots,l+1)$ with
\begin{equation}
   g_j=\sum_{c=1}^{l+1}(-|\mu|^2)^{\{c-j\}}|h_c|^2\E{-u_c}.
\end{equation}

\subsection{Two dimensional elliptic $A_{2l}^{(2)}$ Toda equation $(l\ge 1)$}\

In this case, $M=N=2l+1$, $r_1,\cdots,r_{2l+1}=1$, $m=2$;
$\D\omega=\exp\Big(\frac{2\pi\I}{2l+1}\Big)$;
$\D\Omega=(\Omega_{jk})_{1\le j,k\le 2l+1}$ with
$\Omega_{jj}=\omega^{-j+1}\delta_{jk}$; $\D K=(K_{jk})_{1\le j,k\le
2l+1}$ with $K_{jk}=\delta_{j+k,2}$; $J=(J_{jk})_{1\le j,k\le 2l+1}$
with $J_{jk}=\delta_{j+1,k}$; $V=(V_{jk})_{1\le j,k\le 2l+1}$ with
non-zero entries $V_{11}=1$, $V_{j+1,j+1}=\E{u_j}$ and
$V_{2l+2-j,2l+2-j}=\E{-u_j}$ for $j=1,\cdots,l$. The equation is
\begin{equation}
   \begin{array}{l}
   \D\triangle u_1=\E{u_1}-\E{u_2-u_1},\\
   \D\triangle u_j=\E{u_j-u_{j-1}}-\E{u_{j+1}-u_j}\quad
   (j=2,\cdots,l-1),\\
   \D\triangle u_l=\E{u_l-u_{l-1}}-\E{-2u_l}
   \end{array}\label{eq:eqA2even}
\end{equation}
when $l\ge 2$ and
\begin{equation}
   \begin{array}{l}
   \D\triangle u_1=\E{u_1}-\E{-2u_1}.\label{eq:eqA2even1}
   \end{array}
\end{equation}
when $l=1$. By setting $w_j=-(u_1+\cdots+u_j)$ $(j=1,\cdots,l)$,
they become
\begin{equation}
   \begin{array}{l}
   \D\triangle w_1=\E{2w_1-w_2}-\E{-w_1},\\
   \D\triangle w_j=\E{-w_{j-1}+2w_j-w_{j+1}}-\E{-w_1}\quad(j=2,\cdots,l-1),\\
   \D\triangle w_l=\E{-2w_{l-1}+2w_l}-\E{-w_1}
   \end{array}
\end{equation}
and
\begin{equation}
   \begin{array}{l}
   \D\triangle w_1=\E{2w_1}-\E{-w_1}
   \end{array}
\end{equation}
respectively.

Let $(u_1,\cdots,u_l)$ be a solution of (\ref{eq:eqA2even}) or
(\ref{eq:eqA2even1}), $(h_1,\cdots,h_{2l+1})^T$ be a column solution
of the corresponding Lax pair where $h_1,\cdots,h_{2l+1}$ are scalar
functions. By Theorem~\ref{thm}, the new solution of
(\ref{eq:eqA2even}) or (\ref{eq:eqA2even1}) is $\D\widetilde
u_j=u_j+\ln\frac{g_{j+1}}{g_j}$ $(j=1,\cdots,l)$ where
\begin{equation}
   \begin{array}{rl}
   g_j=&\D\frac14\Big|\sum_{c=1}^{2l+1}(-1)^{\{c-j\}}h_{2-c}h_c\Big|^2
   -|\mu|^2\gamma_{j-1}\gamma_{2-j}
   \end{array}
\end{equation}
and
\begin{equation}
   \begin{array}{rl}
   \gamma_j=&\D\frac 1{1+|\mu|^{4l+2}}\Bigg((-|\mu|^2)^{\{j-1\}}|h_1|^2+
   \sum_{c=1}^{l}
   (-|\mu|^2)^{\{j+c-1\}}|h_{1+c}|^2\E{-u_c}\\
   &\D+\sum_{c=1}^{l}
   (-|\mu|^2)^{\{j-c-1\}}|h_{1-c}|^2\E{u_c}\Bigg).
   \end{array}
\end{equation}

\subsection{Two dimensional elliptic $C_l^{(1)}$ Toda equation $(l\ge 2)$}\

In this case, $M=N=2l$, $r_1,\cdots,r_{2l}=1$, $m=1$;
$\D\omega=\exp\Big(\frac{2\pi\I}{2l}\Big)$;
$\D\Omega=(\Omega_{jk})_{1\le j,k\le 2l}$ with
$\Omega_{jk}=\omega^{-j+1}\delta_{jk}$; $\D K=(K_{jk})_{1\le j,k\le
2l}$ with $K_{jk}=\delta_{j+k,1}$; $J=(J_{jk})_{1\le j,k\le 2l}$
with $J_{jk}=\delta_{j+1,k}$; $\D V=(V_{jk})_{1\le j,k\le 2l}$ with
non-zero entries $V_{jj}=\E{u_j}$ and $V_{2l+1-j,2l+1-j}=\E{-u_j}$
for $j=1,\cdots,l$. The equation is
\begin{equation}
   \begin{array}{l}
   \D\triangle u_1=\E{2u_1}-\E{u_2-u_1},\\
   \D\triangle u_j=\E{u_j-u_{j-1}}-\E{u_{j+1}-u_j}\quad
   (j=2,\cdots,l-1),\\
   \D\triangle u_l=\E{u_{l}-u_{l-1}}-\E{-2u_{l}}.
   \end{array}\label{eq:eqC1}
\end{equation}
By setting $w_j=-(u_1+\cdots+u_j)$ $(j=1,\cdots,l)$, it becomes
\begin{equation}
   \begin{array}{l}
   \D\triangle w_1=\E{2w_1-w_2}-\E{-2w_1},\\
   \D\triangle w_j=\E{-w_{j-1}+2w_j-w_{j+1}}-\E{-2w_1}\quad(j=2,\cdots,l-1),\\
   \D\triangle w_l=\E{-2w_{l-1}+2w_l}-\E{-2w_1}.
   \end{array}
\end{equation}

Let $(u_1,\cdots,u_l)$ be a solution of (\ref{eq:eqC1}),
$(h_1,\cdots,h_{2l})^T$ be a column solution of the corresponding
Lax pair where $h_1,\cdots,h_{2l}$ are scalar functions. By
Theorem~\ref{thm}, the new solution of (\ref{eq:eqC1}) is
$\D\widetilde u_j=u_j+\ln\frac{g_{j+1}}{g_j}$ $(j=1,\cdots,l)$ where
\begin{equation}
   \begin{array}{rl}
   g_j=&\D\frac1{4l^2}\Big|\sum_{c=1}^{2l}(-1)^{c}\big(
   \{c-j\}h_{1-c}h_c-\mu h_{1-c}^{\prime}h_c\big)\Big|^2
   -|\mu|^2\gamma_j\gamma_{2-j}
   \end{array}
\end{equation}
and
\begin{equation}
   \begin{array}{rl}
   \gamma_j=&\D\frac 1{1-|\mu|^{4l}}\Bigg(\sum_{c=1}^{l}
   (-|\mu|^2)^{\{j+c-2\}}|h_{c}|^2\E{-u_c}
   +\sum_{c=1}^{l}
   (-|\mu|^2)^{\{j-c-1\}}|h_{1-c}|^2\E{u_c}\Bigg).
   \end{array}
\end{equation}

\subsection{Two dimensional elliptic $D_{l+1}^{(2)}$ Toda equation $(l\ge 2)$}\

In this case, $M=N=2l+2$, $r_1,\cdots,r_{2l+2}=1$,
$m=2$; $\D\omega=\exp\Big(\frac{2\pi\I}{2l+2}\Big)$;
$\D\Omega=(\Omega_{jk})_{1\le j,k\le 2l+2}$ with
$\Omega_{jk}=\omega^{-j+1}\delta_{jk}$; $\D K=(K_{jk})_{1\le j,k\le
2l+2}$ with $K_{jk}=\delta_{j+k,2}$; $J=(J_{jk})_{1\le j,k\le 2l+2}$
with $J_{jk}=\delta_{j+1,k}$; $V=(V_{jk})_{1\le j,k\le 2l+2}$ with
non-zero entries $V_{11}=1$, $V_{l+2,l+2}=1$, $V_{j+1,j+1}=\E{u_j}$
and $V_{2l+3-j,2l+3-j}=\E{-u_j}$ for $j=1,\cdots,l$. The equation is
\begin{equation}
   \begin{array}{l}
   \D\triangle u_1=\E{u_1}-\E{u_2-u_1},\\
   \D\triangle u_j=\E{u_j-u_{j-1}}-\E{u_{j+1}-u_j}\quad
   (j=2,\cdots,l-1),\\
   \D\triangle u_l=\E{u_{l}-u_{l-1}}-\E{-u_{l}}.
   \end{array}\label{eq:eqD2}
\end{equation}
By setting $w_j=-(u_1+\cdots+u_j)$ $(j=1,\cdots,l-1)$ and
$w_l=-\frac 12(u_1+\cdots+u_l)$, it becomes
\begin{equation}
   \begin{array}{l}
   \D\triangle w_1=\E{2w_1-w_2}-\E{-w_1},\\
   \D\triangle w_j=\E{-w_{j-1}+2w_j-w_{j+1}}-\E{-w_1}\quad(j=2,\cdots,l-2),\\
   \D\triangle
   w_{l-1}=\E{-w_{l-2}+2w_{l-1}-2w_l}-\E{-w_1},\quad
   \D\triangle w_l=\frac 12\E{-w_{l-1}+2w_l}-\frac
   12\E{-w_1}.
   \end{array}
\end{equation}

Let $(u_1,\cdots,u_l)$ be a solution of (\ref{eq:eqD2}),
$(h_1,\cdots,h_{2l+2})^T$ be a column solution of the corresponding
Lax pair satisfying
\begin{equation}
   h_1^2+(-1)^{l+1}h_{l+2}^2+2\sum_{c=2}^{l+1}(-1)^{c-1}h_ch_{2-c}=0
\end{equation}
where $h_1,\cdots,h_{2l+2}$ are scalar functions. By
Theorem~\ref{thm}, the new solution of (\ref{eq:eqD2}) is
$\D\widetilde u_j=u_j+\ln\frac{g_{j+1}}{g_j}$ $(j=1,\cdots,l)$ where
\begin{equation}
   \begin{array}{rl}
   g_j=&\D\frac1{(2l+2)^2}\Big|\sum_{c=1}^{2l+2}(-1)^{c}
   \{c-j\}h_{2-c}h_c\Big|^2-|\mu|^2\gamma_{j-1}\gamma_{2-j}
   \end{array}
\end{equation}
and
\begin{equation}
   \begin{array}{rl}
   \gamma_j=&\D\frac 1{1-|\mu|^{4l+4}}\Bigg((-|\mu|^2)^{\{j-1\}}|h_1|^2
   +(-|\mu|^2)^{\{j+l\}}|h_{l+2}|^2\\
   &\D+\sum_{c=1}^{l}
   (-|\mu|^2)^{\{j+c-1\}}|h_{1+c}|^2\E{-u_c}
   +\sum_{c=1}^{l}
   (-|\mu|^2)^{\{j-c-1\}}|h_{1-c}|^2\E{u_c}\Bigg).
   \end{array}
\end{equation}

\subsection{Two dimensional elliptic $A_{2l-1}^{(2)}$ Toda equation $(l\ge 3)$}\

In this case, $N=2l-1$, $M=N+1$, $r_1=2$, $r_2=\cdots=r_{2l-1}=1$,
$m=2$; $\D\omega=\exp\Big(\frac{2\pi\I}{2l-1}\Big)$;
$\Omega=(\Omega_{jk})_{1\le j,k\le 2l-1}$ with non-zero entries
$\D\Omega_{11}=\left(\begin{array}{cc}1\\&1\end{array}\right)$ and
$\Omega_{jj}=\omega^{-j+1}$ for $j=2,\cdots,2l-1$; $K=(K_{jk})_{1\le
j,k\le 2l-1}$ with non-zero entries $\D
K_{11}=\left(\begin{array}{cc}&1\\1\end{array}\right)$ and
$K_{j,2l+1-j}=1$ for $j=2,\cdots,2l-1$; $J=(J_{jk})_{1\le j,k\le
2l-1}$ with non-zero entries $\D
J_{12}=\left(\begin{array}{c}1\\1\end{array}\right)$, $\D
J_{2l-1,1}=\left(\begin{array}{cc}1&1\end{array}\right)$ and $\D
J_{j,j+1}=1$ for $j=2,\cdots,2l-2$; $V=(V_{jk})_{1\le j,k\le 2l-1}$
with non-zero entries $\D
V_{11}=\left(\begin{array}{cc}\E{u_1}\\&\E{-u_1}\end{array}\right)$,
$V_{jj}=\E{u_j}$ and $V_{2l+1-j,2l+1-j}=\E{-u_j}$ for
$j=2,\cdots,l$. The equation is
\begin{equation}
   \begin{array}{l}
   \D\triangle u_1=\E{u_2+u_1}-\E{u_2-u_1},\quad
   \D\triangle u_2=\E{u_2+u_1}+\E{u_2-u_1}-\E{u_3-u_2},\\
   \D\triangle u_j=\E{u_j-u_{j-1}}-\E{u_{j+1}-u_j}\quad(j=3,\cdots,l-1),\\
   \D\triangle u_l=\E{u_l-u_{l-1}}-\E{-2u_l}.
   \end{array}\label{eq:eqA2odd}
\end{equation}
By setting $w_j=-(u_1+u_2+\cdots+u_j)$ $(j=1,\cdots,l)$, it becomes
\begin{equation}
   \begin{array}{l}
   \D\triangle w_1=\E{2w_1-w_2}-\E{-w_2},\\
   \D\triangle w_j=\E{-w_{j-1}+2w_j-w_{j+1}}-2\E{-w_2}\quad(j=2,\cdots,l-1),\\
   \D\triangle w_l=\E{-2w_{l-1}+2w_l}-2\E{-w_2}.
   \end{array}
\end{equation}

Let $(u_1,\cdots,u_l)$ be a solution of (\ref{eq:eqA2odd}),
$\D\left(\begin{array}{c}h_1\\\vdots\\h_{2l-1}\end{array}\right)$ be
a column solution of the corresponding Lax pair where $\D
h_1=\left(\begin{array}{c}h_{11}\\h_{12}\end{array}\right)$ and
$h_{11},h_{12},h_2,\cdots,h_{2l-1}$ are scalar functions. By
Theorem~\ref{thm}, the new solution of (\ref{eq:eqA2odd}) is
\begin{equation}
   \begin{array}{l}
   \D \widetilde u_1=u_1+\ln\frac{\D|h_{11}|^2\E{-u_1}-\gamma_1}
   {\D|h_{12}|^2\E{u_1}-\gamma_1},\\
   \D \widetilde u_j=u_j+\ln\frac{g_{j+1}}{g_j}\quad
   \D(j=2,\cdots,l)
   \end{array}
\end{equation}
where
\begin{equation}
   \begin{array}{rl}
   g_j=&\D\frac14\Big|2h_{11}h_{12}
   +\sum_{c=2}^{2l-1}(-1)^{\{c-j\}-\{1-j\}}
   h_{2-c}h_c\Big|^2
   -|\mu|^2\gamma_{j-1}\gamma_{2-j}
   \end{array}
\end{equation}
and
\begin{equation}
   \begin{array}{rl}
   \gamma_j=&\D\frac 1{1+|\mu|^{4l-2}}\Bigg((-|\mu|^2)^{\{j-1\}}
   (|h_{11}|^2\E{-u_1}+|h_{12}|^2\E{u_1})\\
   &\D+\sum_{c=2}^{l}
   (-|\mu|^2)^{\{j+c-2\}}|h_c|^2\E{-u_c}
   +\sum_{c=2}^{l}
   (-|\mu|^2)^{\{j-c\}}|h_{2-c}|^2\E{u_c}\Bigg).
   \end{array}
\end{equation}

\subsection{Two dimensional elliptic $B_l^{(1)}$ Toda equation $(l\ge 3)$} \

In this case, $N=2l$, $M=N+1$, $r_1=2$, $r_2=\cdots=r_{2l}=1$,
$m=2$; $\D\omega=\exp\Big(\frac{2\pi\I}{2l}\Big)$;
$\D\Omega=(\Omega_{jk})_{1\le j,k\le 2l}$ with non-zero entries
$\D\Omega_{11}=\left(\begin{array}{cc}1\\&1\end{array}\right)$ and
$\Omega_{jj}=\omega^{-j+1}$ for $j=2,\cdots,2l$; $K=(K_{jk})_{1\le
j,k\le 2l}$ with non-zero entries $\D
K_{11}=\left(\begin{array}{cc}&1\\1\end{array}\right)$ and
$K_{j,2l+2-j}=1$ for $j=2,\cdots,2l$; $J=(J_{jk})_{1\le j,k\le 2l}$
with non-zero entries
$J_{12}=\left(\begin{array}{c}1\\1\end{array}\right)$,
$J_{2l,1}=\left(\begin{array}{cc}1&1\end{array}\right)$ and $\D
J_{j,j+1}=1$ for $j=2,\cdots,2l-1$; $V=(V_{jk})_{1\le j,k\le 2l}$
with non-zero entries
$V_{11}=\left(\begin{array}{cc}\E{u_1}\\&\E{-u_1}\end{array}\right)$,
$V_{l+1,l+1}=1$, $V_{jj}=\E{u_j}$ and $V_{2l+2-j,2l+2-j}=\E{-u_j}$
for $j=2,\cdots,l$. The equation is
\begin{equation}
   \begin{array}{l}
   \D\triangle u_1=\E{u_2+u_1}-\E{u_2-u_1},\quad
   \D\triangle u_2=\E{u_2+u_1}+\E{u_2-u_1}-\E{u_3-u_2},\\
   \D\triangle u_j=\E{u_j-u_{j-1}}-\E{u_{j+1}-u_j}\quad (j=3,\cdots,l-1),\\
   \D\triangle u_l=\E{u_l-u_{l-1}}-\E{-u_l}.
   \end{array}\label{eq:eqB1}
\end{equation}
By setting $w_j=-(u_1+\cdots+u_j)$ $(j=1,\cdots,l-1)$, $w_l=\frac
12(u_1+\cdots+u_l)$, it becomes
\begin{equation}
   \begin{array}{l}
   \D\triangle w_1=\E{2w_1-w_2}-\E{-w_2},\\
   \D\triangle w_j=\E{-w_{j-1}+2w_j-w_{j+1}}-2\E{-w_2}\quad(j=2,\cdots,l-2),\\
   \D\triangle
   w_{l-1}=\E{-w_{l-2}+2w_{l-1}-2w_l}-2\E{-w_2},\quad
   \D\triangle w_l=\frac 12\E{-w_{l-1}+2w_l}-\E{-w_2}.\\
   \end{array}
\end{equation}

Let $(u_1,\cdots,u_l)$ be a solution of (\ref{eq:eqB1}),
$\D\left(\begin{array}{c}h_1\\\vdots\\h_{2l}\end{array}\right)$ be a
column solution of the corresponding Lax pair satisfying
\begin{equation}
   2h_{11}h_{12}+(-1)^{l}h_{l+1}^2+2\sum_{c=2}^l(-1)^{c-1}h_ch_{2-c}=0
\end{equation}
where $\D
h_1=\left(\begin{array}{c}h_{11}\\h_{12}\end{array}\right)$ and
$h_{11},h_{12},h_2,\cdots,h_{2l}$ are scalar functions. By
Theorem~\ref{thm}, the new solution of (\ref{eq:eqB1}) is
\begin{equation}
   \begin{array}{l}
   \D \widetilde u_1=u_1+\ln\frac{\D|h_{11}|^2\E{-u_1}-\gamma_1}
   {\D|h_{12}|^2\E{u_1}-\gamma_1},\\
   \D \widetilde u_j=u_j+\ln\frac{g_{j+1}}{g_j}\quad
   \D(j=2,\cdots,l)
   \end{array}
\end{equation}
where
\begin{equation}
   g_j=\frac1{4l^2}\Big|2\{1-j\}h_{11}h_{12}
   +\sum_{c=2}^{2l}(-1)^{c+1}
   \{c-j\}h_{2-c}h_c\Big|^2
   -|\mu|^2\gamma_{j-1}\gamma_{2-j}
\end{equation}
and
\begin{equation}
   \begin{array}{rl}
   \gamma_j=&\D\frac 1{1-|\mu|^{4l}}
   \Bigg((-|\mu|^2)^{\{j-1\}}(|h_{11}|^2\E{-u_1}+|h_{12}|^2\E{u_1})
   +(-|\mu|^2)^{\{j+l-1\}}|h_{l+1}|^2\\
   &\D+\sum_{c=2}^{l}
   (-|\mu|^2)^{\{j+c-2\}}|h_c|^2\E{-u_c}
   +\sum_{c=2}^{l}
   (-|\mu|^2)^{\{j-c\}}|h_{2-c}|^2\E{u_c}\Bigg).
   \end{array}
\end{equation}

\subsection{Two dimensional elliptic $D_l^{(1)}$ Toda equation $(l\ge 4)$}\

In this case, $N=2l-2$, $M=N+2$, $r_1=r_{l+1}=2$,
$r_2=\cdots=r_l=r_{l+2}=\cdots=r_{2l}=1$, $m=2$;
$\D\omega=\exp\Big(\frac{2\pi\I}{2l-2}\Big)$;
$\Omega=(\Omega_{jk})_{1\le j,k\le 2l-2}$ with non-zero entries
$\D\Omega_{11}=\Omega_{ll}=\left(\begin{array}{cc}1\\&1\end{array}\right)$
and $\Omega_{jj}=\omega^{-j+1}$ for
$j=2,\cdots,l-1,l+1,\cdots,2l-2$; $K=(K_{jk})_{1\le j,k\le 2l-2}$
with non-zero entries $\D
K_{11}=K_{ll}=\left(\begin{array}{cc}&1\\1\end{array}\right)$ and
$K_{j,2l-j}=1$ for $j=2,\cdots,l-1,l+1,\cdots,2l-2$;
$J=(J_{jk})_{1\le j,k\le 2l-2}$ with non-zero entries $\D
J_{12}=J_{l,l+1}=\left(\begin{array}{c}1\\1\end{array}\right)$,
$J_{l-1,l}=J_{2l-2,1}=\left(\begin{array}{cc}1&1\end{array}\right)$
and $\D J_{j,j+1}=1$ for $j=2,\cdots,l-2,l,\cdots,2l-3$; $\D
V=(V_{jk})_{1\le j,k\le 2l-2}$ with non-zero entries
$V_{11}=\left(\begin{array}{cc}\E{-u_1}\\&\E{u_1}\end{array}\right)$,
$V_{ll}=\left(\begin{array}{cc}\E{u_l}\\&\E{-u_l}\end{array}\right)$,
$V_{jj}=\E{u_j}$ and $V_{2l-j,2l-j}=\E{-u_j}$ for $j=2,\cdots,l-1$.
The equation is
\begin{equation}
   \begin{array}{l}
   \D\triangle u_1=\E{u_2+u_1}-\E{u_2-u_1},\quad
   \D\triangle u_2=\E{u_2+u_1}+\E{u_2-u_1}-\E{u_3-u_2},\\
   \D\triangle u_j=\E{u_j-u_{j-1}}-\E{u_{j+1}-u_j}\quad(j=3,\cdots,l-2),\\
   \D\triangle
   u_{l-1}=\E{u_{l-1}-u_{l-2}}-\E{u_l-u_{l-1}}-\E{-u_l-u_{l-1}},\quad
   \D\triangle u_l=\E{u_l-u_{l-1}}-\E{-u_l-u_{l-1}}.
   \end{array}\label{eq:eqD1}
\end{equation}

By setting $w_j=-(u_1+\cdots+u_j)$ $(j=1,\cdots,l-2)$,
$w_{l-1}=-\frac 12(u_1+\cdots+u_{l-1}+u_l)$, $w_l=-\frac
12(u_1+\cdots+u_{l-1}-u_l)$, it becomes
\begin{equation}
   \begin{array}{l}
   \D\triangle w_1=\E{2w_1-w_2}-\E{-w_2},\\
   \D\triangle w_j=\E{-w_{j-1}+2w_j-w_{j+1}}-2\E{-w_2}\quad(j=2,\cdots,l-3),\\
   \D\triangle w_{l-2}=\E{-w_{l-3}+2w_{l-2}-w_{l-1}-w_l}-2\E{-w_2},\\
   \D\triangle w_{l-1}=\E{-w_{l-2}+2w_{l-1}}-\E{-w_2},\quad
   \D\triangle w_l=\E{-w_{l-2}+2w_l}-\E{-w_2}.
   \end{array}
\end{equation}

Let $(u_1,\cdots,u_l)$ be a solution of (\ref{eq:eqD1}),
$\D\left(\begin{array}{c}h_1\\\vdots\\h_{2l}\end{array}\right)$ be a
column solution of the corresponding Lax pair satisfying
\begin{equation}
   h_{11}h_{12}+(-1)^{l-1}h_{l1}h_{l2}+\sum_{c=2}^{l-1}(-1)^{c-1}h_ch_{2-c}=0
\end{equation}
where $\D
h_1=\left(\begin{array}{c}h_{11}\\h_{12}\end{array}\right)$ and $\D
h_l=\left(\begin{array}{c}h_{l1}\\h_{l2}\end{array}\right)$, and
$h_{11},h_{12},h_2,\cdots,h_{l-1}$, $h_{l1},h_{l2},h_{l+1},\cdots$,
$h_{2l-1}$ are scalar functions. By Theorem~\ref{thm}, the new
solution of (\ref{eq:eqD1}) is
\begin{equation}
   \begin{array}{l}
   \D \widetilde u_1=u_1+\ln\frac{\D|h_{11}|^2\E{u_1}-\gamma_1}
   {\D|h_{12}|^2\E{-u_1}-\gamma_1},\\
   \D\widetilde u_j=u_j+\ln\frac{g_{j+1}}{g_j}\quad
   \D(j=2,\cdots,l-1),\\
   \D \widetilde u_l=u_l+\ln\frac{\D|h_{l1}|^2\E{-u_l}-\gamma_l}
   {\D|h_{l2}|^2\E{u_l}-\gamma_l}
   \end{array}
\end{equation}
where
\begin{equation}
   \begin{array}{rl}
   g_j=&\D\frac1{(2l-2)^2}\Big|2\{1-j\}h_{11}h_{12}
   +2(-1)^{l-1}\{l-j\}h_{l1}h_{l2}\\
   &\D+\sum_{c=2\atop c\ne l}^{2l-2}(-1)^{c+1}
   \{c-j\}h_{2-c}h_c\Big|^2-|\mu|^2\gamma_{j-1}\gamma_{2-j}
   \end{array}
\end{equation}
and
\begin{equation}
   \begin{array}{rl}
   \gamma_j=&\D\frac 1{1-|\mu|^{4l-4}}\Bigg((-|\mu|^2)^{\{j-1\}}
   (|h_{11}|^2\E{u_1}+|h_{12}|^2\E{-u_1})\\
   &\D+(-|\mu|^2)^{\{j+l-2\}}(|h_{l1}|^2\E{-u_l}+|h_{l2}|^2\E{u_l})\\
   &\D+\sum_{c=2}^{l-1}
   (-|\mu|^2)^{\{j+c-2\}}|h_c|^2\E{-u_c}
   +\sum_{c=2}^{l-1}
   (-|\mu|^2)^{\{j-c\}}|h_{2-c}|^2\E{u_c}\Bigg).
   \end{array}
\end{equation}

\section*{Acknowledgements} This work was supported by the
National Basic Research Program of China (973 Program)
(2007CB814800) and the Key Laboratory of Mathematics for Nonlinear
Sciences of Ministry of Education of China.

\bigskip

\end{document}